\numberwithin{equation}{section}
\DeclareMathOperator{\wt}{wt}
\DeclareMathOperator{\maxwt}{maxwt}
\newtheorem{theorem}{Theorem}[section]
\newtheorem{proposition}[theorem]{Proposition}                                                      
\newtheorem{lemma}[theorem]{Lemma}
\newtheorem{corollary}[theorem]{Corollary}
\theoremstyle{definition}
\newtheorem{definition}[theorem]{Definition}
\newtheorem{remark}[theorem]{Remark}
\newtheorem{example}[theorem]{Example}
\newcommand{\rmv}[1]{{}}
\newcommand{\elisa}[1]{{\color{blue} #1}}
\newcommand{\FF}{\mathbb{F}}
\title{Code distances: a new family of invariants of linear codes}
\author{Eduardo Camps-Moreno}
\address[Eduardo Camps-Moreno]{Department of Mathematics\\ Virginia Tech\\ 
Blacksburg, VA, USA}
\email{eduardoc@vt.edu}
\author{Elisa Gorla}
\address[Elisa Gorla]{Department of Mathematics\\ University of Neuchatel\\ Neuchatel, Switzerland}
\email{elisa.gorla@unine.ch}
\author{Hiram H. L\'opez}
\address[Hiram H. L\'opez]{Department of Mathematics\\ Virginia Tech\\ 
Blacksburg, VA, USA}
\email{hhlopez@vt.edu}
\thanks{Hiram H. L\'opez was partially supported by the NSF Grants DMS-2401558 and DMS-2502705. The authors thank Virginia Tech's Steger Center for International Scholarship, where some of the work contained in this paper was done.}
\subjclass[2020]{94B05, 94B75}
\keywords{Invariants of codes, minimum distance, generalized weights, covering radius, code distances.}
\begin{document}
\begin{abstract}
In this paper, we introduce code distances, a new family of invariants for linear codes. We establish some properties and prove bounds on the code distances, and show that they are not invariants of the matroid (for a linear block code) or $q$-polymatroid (for a rank-metric code) associated to the code. By means of examples, we show that the code distances allow us to distinguish some inequivalent MDS or MRD codes with the same parameters. We also show that no duality holds, i.e., the sequence of code distances of a code does not determine the sequence of code distances of its dual. Further, we define a greedy and an asymptotic version of code distances. Finally, we relate these invariants to other invariants of linear codes, such as the maximality degree, the covering radius,  and the partial distances of polar codes.
\end{abstract}

\maketitle

\section{Introduction}

Let $V$ be a vector space of finite dimension over a finite field $\FF_q$ and $d$ a metric on $V$. A linear code is a linear subspace of $V$. Some of the classical examples are when $V = \FF_q^n$ and $d$ is the Hamming metric (block code), $V=\mathbb{F}_q^{m\times n}$ and $d$ is the rank metric (rank-metric code), and when $V=\mathbb{F}_q^{m_1\times n_1}\times\ldots\times\FF_q^{m_\ell\times n_\ell}$ and $d$ is the sum-rank metric (sum-rank metric code). 

A fundamental problem in coding theory asks to distinguish codes that are not equivalent, i.e., codes for which there exists no linear isometry of the ambient space that sends one to the other.
This may sometimes be done by means of invariants, since codes with different invariants cannot be equivalent. Classical code invariants include the generalized weights and the weight distribution of a code. These are both invariants of the associated matroid (for a linear block code) or $q$-polymatroid (for a rank-metric code). Moreover,  duality holds for these invariants, meaning that the generalized weights of a code determine those of its dual and, similarly, the weight distribution of a code determines that of its dual. Other invariants, such as the covering radius of a code, are not invariants of the associated matroid or $q$-polymatroid.

In this paper, we define a new family of invariants of linear codes, which we call code distances, as they depend on the minimum distances of the subcodes and supercodes of the code in question. More precisely, if $C\subseteq V$ is a $k$-dimensional code, then for $1\leq i\leq k$, the $i$-th subcode distance of $C$ is
$$\alpha_i(C)=\max\{d_{\min}(D)\ :\ D\subseteq C,\ \dim D=i\}.$$
For $k\leq i\leq n$, the $i$-th supercode distance of $C$ is
$$\alpha_i(C)=\max\{d_{\min}(D)\ :\ D\supseteq C,\ \dim D=i\}.$$
These are genuinely new code invariants, as we show among other results that they are not invariants of the associated matroid or $q$-polymatroid. As such, they may allow us to distinguish inequivalent codes and indeed we show by means of examples that they may allow us to distinguish inequivalent MDS or MRD codes. 

The paper is structured as follows. Section~\ref{prelimiaries} contains some preliminary definitions and results. In Section~\ref{main} we define code distances and propose several examples, that illustrate the main ideas and properties of this new family of invariants. We show that the code distances form a non-increasing integer sequence and that some of them satisfy monotonicity properties (see Proposition~\ref{alphainequality} and Proposition~\ref{alphas of subcodes}, respectively). In Proposition~\ref{alpha for MDS}, Proposition~\ref{alpha for MRD}, and Proposition~\ref{alphaMSDR}, we use the Singleton Bound to derive upper bounds on the code distances. Theorem~\ref{thm:notinvariant} and Theorem~\ref{thm:polymatroid} show that code distances are not invariants of the matroid or $q$-polymatroid of the code. In Section~\ref{greedy}, we propose a greedy version of code distances and in Theorem~\ref{thm:polar} we compare it to the partial distances of polar codes. In Section~\ref{comparison} we compare the code distances to other code invariants. In particular, Proposition~\ref{prop:mu} gives a relation between one of the code distances and the maximality degree. Theorem~\ref{thm:radii} relates some of the greedy code distances to the covering radius of appropriate subcodes and some of the code distances to generalized radii, another new sequence of invariants that we introduce in this paper. Proposition~\ref{alpha strong} allows us to read some of the code distances from the parity check matrix of the code. Finally, in Section~\ref{asymtotic} we consider the code distances of the extension of a code to every field containing the field over which it is defined. This allows us to define an asymptotic version of code distances, which turn out to be related to generalized covering radii, see Theorem~\ref{thm:genradii}.

\section{Preliminaries}\label{prelimiaries}

In this section, we introduce the concepts and notation we will use for the rest of the manuscript.

Let $V$ an $N$-dimensional vector space over a finite field $\mathbb{F}_q$ and $$d : V\times V\rightarrow\mathbb{R}$$ be a metric over $V$. We are mostly interested in the following situations:
\begin{itemize}
    \item $V=\FF_q^n$ and $d$ is the Hamming metric,
    \item $V=\mathbb{F}_q^{m\times n}$ and $d$ is the rank metric, and
    \item $V=\mathbb{F}_q^{m_1\times n_1}\times\ldots\times\FF_q^{m_\ell\times n_\ell}$ and $d$ is the sum-rank metric.
\end{itemize}

Throughout the paper, $C$ is a {\bf linear code}, i.e. a linear subspace of $V$. When $V = \FF_q^n$ and $d$ is the Hamming metric, $C$ is known as a {\bf block code}. If $V=\mathbb{F}_q^{m\times n}$ and $d$ is the rank metric, $C$ is a {\bf rank-metric code}. When $V=\mathbb{F}_q^{m_1\times n_1}\times\ldots\times\FF_q^{m_\ell\times n_\ell}$ and $d$ is the sum-rank metric, $C$ is known as a {\bf sum rank-metric code}.

Let $k$ be the {\bf dimension} of $C$ as an $\FF_q$-linear space. For any $v\in V$, we let $$wt(v)=d(v,0)$$ be the {\bf weight} of $v$. For any nontrivial code $0\neq C\subseteq V$, we denote its {\bf minimum distance} by
$$d_{\min}(C)=\min\{d(w,v)\ :\ w,v\in C,\ w\neq v\}=\min\{\wt(v)\ :\ v\in C\setminus 0\}$$
and its {\bf maximum weight} by $$\maxwt(C)=\max\{\wt(v)\ :\ v\in C\}=\max\{d(w,v)\ :\ w,v\in C\}.$$

The minimum distance is one of the main parameters of a code. It is well-known that it measures the error-correction capability of the code, see~\cite{HP03}. The Singleton Bound is one of the most classical results capturing a relation between some of the the main parameters of a code. For a code $C \subseteq V$ of dimension $k$, the bound states the following.
\begin{itemize}
\item If $V=\FF_q^n$ and $d$ is the Hamming distance, then $$d_{\min}(C) \leq n-k+1.$$ A code that meets the bound is called a {\bf maximum-distance separable (MDS) code}.
\item If $V=\mathbb{F}_q^{m\times n}$, $n\leq m$, and $d$ is the rank metric, then
$$d_{\min}(C) \leq n - \lceil k/m \rceil + 1.$$
If $m$ divides $k$, a code that meets the bound is called a {\bf maximum rank distance (MRD) code}. If $m$ does not divide $k$, a code achieving the bound is called a {\bf quasi-maximum rank distance (QMRD) code}.
\item If $V=\mathbb{F}_q^{m_1\times n_1}\times\ldots\times\FF_q^{m_\ell\times n_\ell}$, $n_i\leq m_i$ for all $i$, and $d$ is the sum-rank metric, let $j$ and $\delta$ be the unique integers satisfying $d_{\min}(C)-1 = \sum^{j-1}_{i=1} n_i + \delta$ and $0 \leq \delta \leq n_j-1$. Then
\[k \leq \sum_{i=j}^\ell m_in_i - m_j\delta. \]
A code that achieves the bound is called a {\bf maximum sum-rank distance (MSRD) code}.
\end{itemize}

The ring of univariate polynomials over $\FF_q$ is denoted by $\FF_q[x]$. Given a positive integer $k$, $\FF_{q}[x]_{<k}$ denotes the set of polynomials of degree smaller than $k$. A subset $A=\{a_1,\ldots, a_n\}\subseteq\FF_q$ defines an {\bf evaluation map}
\[\begin{array}{lccc}
&\FF_{q}[x]_{<k} & \to & \FF_{q}^n \\
& f & \mapsto & f(A)=\left(f(a_1),\ldots, f(a_n)\right).
\end{array}\]

\begin{definition}
Let $A=\{a_1,\ldots, a_n\}\subseteq\FF_q$. The {\bf Reed-Solomon (RS) code} is the image of the evaluation map
\[ RS(A,k) = \left\{ f(A) : f \in \FF_{q}[x]_{<k} \right\}.\]
A Reed-Solomon $RS(A,k)$ has minimum distance $n-k+1$, i.e., it is MDS.
\end{definition}
\begin{definition}
Let $m, n, k$ be positive integers such that $m \geq n \geq k$, and let $\{a_1, \ldots, a_n\}\subseteq\FF_{q^m}$ be a set of linearly independent elements over $\FF_{q}$. The $\FF_{q^m}$-linear code generated by the matrix 
\[G = \begin{pmatrix}
a_1 & a_2 & \ldots & a_n\\
a_1^q & a_2^q & \ldots & a_n^q\\
\vdots & \vdots & \ddots & \vdots\\
a_1^{q^{k-1}} & a_2^{q^{k-1}} & \ldots & a_n^{q^{k-1}}
\end{pmatrix}\]
is called a {\bf Gabidulin code} of
length $n$ and dimension $k$.
\end{definition}
It is well-known, and not hard to show, that a Gabidulin code is MRD.

Let $V=\FF_q^n$ and let $d$ be the Hamming distance. Let $C \subseteq V$ be a code of dimension $k$. A {\bf generator matrix} of $C$ is a $k \times n$ matrix such that $C$ is the rowspace of $G$. The {\bf dual} of $C$ is 
\[ C^\perp = \{ c \in \FF_q^{n} : c \cdot c^\prime = 0, \textrm{ for all } c^\prime \in C \},\]
where $c \cdot c^\prime$ is the Euclidean inner product. A {\bf parity check} matrix of $C$ is an $(n-k)\times n$ matrix $H$, whose row space generates $C^{\perp}$. If $C = \FF_q^n$, we let the parity check matrix be $H=\left( 0\; \cdots\; 0 \right) \in \FF_q^n$. A classical result in coding theory relates the minimum distance of $C$ with the minimum number of linearly dependent columns of $H$. Specifically, we have the following result.

\begin{proposition}\cite[Corollary 1.4.14]{HP03}\label{25.02.26}
Let $V=\FF_q^n$ and let $d$ be the Hamming distance. Let $C \subseteq V$ be a linear code and $H$ its parity check matrix. The minimum distance of $C$ is the minimum number of linearly dependent columns of $H$.
\end{proposition}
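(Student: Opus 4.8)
The plan is to translate the statement into a condition on the columns of $H$ and then to establish equality by proving two inequalities. Since the rows of $H$ generate $C^\perp$, a vector $c=(c_1,\ldots,c_n)\in\FF_q^n$ lies in $C$ precisely when it is orthogonal to every row of $H$, i.e., when $Hc^{T}=0$. Writing $H=(h_1\mid\cdots\mid h_n)$ in terms of its columns, this condition is exactly the linear relation $\sum_{j=1}^n c_jh_j=0$. Thus membership of a nonzero $c$ in $C$ corresponds to a linear dependency among those columns $h_j$ indexed by the support $\{j:c_j\neq 0\}$ of $c$, with the coordinates of $c$ serving as coefficients. Let $w$ denote the minimum number of linearly dependent columns of $H$; the goal is to show $d_{\min}(C)=w$.

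First I would show $w\leq d_{\min}(C)$. Choose a nonzero codeword $c$ with $\wt(c)=d_{\min}(C)$ and let $S=\{j:c_j\neq 0\}$, so that $|S|=d_{\min}(C)$. Restricting the relation $\sum_j c_jh_j=0$ to $S$ (the terms outside $S$ vanish) yields $\sum_{j\in S}c_jh_j=0$ with all coefficients nonzero, which exhibits a set of $d_{\min}(C)$ linearly dependent columns; hence $w\leq d_{\min}(C)$.

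For the reverse inequality $d_{\min}(C)\leq w$, take any set of $w$ linearly dependent columns $h_{j_1},\ldots,h_{j_w}$ together with a nontrivial dependency $\sum_{i=1}^w\lambda_ih_{j_i}=0$. Define $c\in\FF_q^n$ by setting its $j_i$-th coordinate equal to $\lambda_i$ and all other coordinates to $0$. Then $Hc^{T}=0$, so $c\in C$, and $c$ is nonzero with $\wt(c)\leq w$; therefore $d_{\min}(C)\leq w$. Combining the two inequalities gives the claim.

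The argument is elementary, so I do not anticipate a genuine obstacle; the only points needing care are the bookkeeping in passing between codewords and column relations, and the degenerate case $C=\FF_q^n$. In that case $H=(0\;\cdots\;0)$ has a zero column, so a single column is already linearly dependent and $w=1$, which matches $d_{\min}(\FF_q^n)=1$; thus the statement holds in this boundary case as well.
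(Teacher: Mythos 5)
Your proof is correct; note that the paper does not prove this statement at all but simply cites it as \cite[Corollary 1.4.14]{HP03}, and your argument—the bijective correspondence between nonzero codewords $c\in C$ (via $Hc^{T}=0$) and nontrivial linear dependencies among the columns of $H$ indexed by the support of $c$, applied in both directions—is exactly the standard proof found there. Your explicit check of the degenerate case $C=\FF_q^n$, where the paper's convention $H=(0\;\cdots\;0)$ gives a single linearly dependent column matching $d_{\min}(\FF_q^n)=1$, is a careful touch that the cited source does not need but which fits the paper's setup.
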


Recall that elements in $V$ are called linearly dependent (l.d.) if there are coefficients in $\FF_q$, not all zero, such that the corresponding linear combination is zero. If we impose the condition that all the coefficients are nonzero, we have the following concept.

\begin{definition}\label{25.02.29}
We say that the elements $v_1\ldots,v_n \in V$ are {\bf strongly linearly dependent} (s.l.d.) if there exist nonzero elements $\lambda_1,\ldots,\lambda_n$ in $\FF_q$ such that
\[\lambda_1 v_1 + \cdots + \lambda_n v_n = 0.\]
\end{definition}
It is clear that s.l.d. elements are l.d. The next example shows that the converse is not necessarily true.

\begin{example}
The elements $v_1=(1,1,0,0)$, $v_2=(0,1,1,0)$, $v_3=(1,0,1,0)$ and $v_4=(0,0,0,1)$ of $\FF_2^4$ are linearly dependent because $1v_1+1v_2+1v_3+0v_4=0$, but they are not strongly linearly dependent because $1v_1+1v_2+1v_3+1v_4 \neq 0$.
\end{example}

One can easily relate the maximum weight of a code with the maximum number of strongly linearly dependent columns of its parity check matrix. The proof follows directly from the definition.

\begin{proposition}\label{25.02.27}
Let $V=\FF_q^n$ and let $d$ be the Hamming distance. Let $C \subseteq V$ be a linear code and $H$ be its parity check matrix. The maximum weight of $C$ is the maximum number of strongly linearly dependent columns of $H$.
\end{proposition}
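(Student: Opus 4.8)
The plan is to mirror the proof of Proposition~\ref{25.02.26}, replacing \emph{minimum} by \emph{maximum} and ordinary linear dependence by strong linear dependence, by setting up a size-preserving correspondence between nonzero codewords of $C$ and strongly linearly dependent sets of columns of $H$. Writing $h_1,\ldots,h_n\in\FF_q^{n-k}$ for the columns of $H$, the starting point is that a vector $c=(c_1,\ldots,c_n)\in\FF_q^n$ lies in $C=\ker H$ if and only if $\sum_{i=1}^n c_ih_i=0$.

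First I would pass from codewords to columns. Given a codeword $c\in C$ with support $S=\{i:c_i\neq 0\}$, so that $\wt(c)=|S|$, the parity check relation collapses to $\sum_{i\in S}c_ih_i=0$; since every $c_i$ with $i\in S$ is nonzero, this exhibits $\{h_i:i\in S\}$ as a strongly linearly dependent set of size $\wt(c)$. This gives the inequality $\maxwt(C)\le M$, where $M$ denotes the maximum number of strongly linearly dependent columns of $H$. For the reverse inequality I would run the construction backwards: given nonzero scalars $\lambda_i$, $i\in S$, with $\sum_{i\in S}\lambda_ih_i=0$, the vector $c$ defined by $c_i=\lambda_i$ for $i\in S$ and $c_i=0$ otherwise satisfies $\sum_{i=1}^n c_ih_i=0$, hence $c\in C$, and $\wt(c)=|S|$ because the $\lambda_i$ are all nonzero. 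Thus $M\le\maxwt(C)$, and the two inequalities together give the claim.

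As the remark preceding the statement indicates, the argument is essentially routine; the one point deserving attention is that the word \emph{strongly} in Definition~\ref{25.02.29} is precisely what forces the witnessing coefficients to be supported on all of $S$, so that the size of the column set equals the weight of the codeword rather than merely bounding it. I would also note for completeness that the degenerate convention $H=(0\;\cdots\;0)$ for $C=\FF_q^n$ is consistent: every set of columns is then trivially strongly linearly dependent, so $M=n=\maxwt(\FF_q^n)$.
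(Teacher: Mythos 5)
Your proof is correct and follows essentially the same route as the paper's: both directions set up the natural correspondence between supports of codewords of $C$ and strongly linearly dependent sets of columns of $H$, exactly as in the paper's (omitted, ``follows directly from the definition'') argument. Your explicit check of the degenerate convention $H=(0\;\cdots\;0)$ for $C=\FF_q^n$ is a small but welcome addition.
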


We observe that, for a given set of vectors, the minimum cardinality of a linearly dependent set of elements and the minimum cardinality of a strongly linearly dependent set of elements coincide.

\begin{lemma}\label{25.02.28}
Let $v_1,\ldots,v_t$ be elements in $\FF_q^n$. The minimum cardinality of a subset of linearly dependent elements coincides with the minimum cardinality of a subset of strongly linearly dependent elements.
\end{lemma}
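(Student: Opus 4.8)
The plan is to show the two quantities coincide by establishing an inequality in each direction, where the only substantive direction is that a smallest linearly dependent subset is automatically strongly linearly dependent. Let me write $a$ for the minimum cardinality of a linearly dependent subset of $\{v_1,\ldots,v_t\}$ and $b$ for the minimum cardinality of a strongly linearly dependent subset. I would first record the trivial inclusion: by Definition~\ref{25.02.29}, every strongly linearly dependent subset is in particular linearly dependent. Since the collection of strongly linearly dependent subsets is thus contained in the collection of linearly dependent subsets, minimizing the cardinality over the smaller collection can only yield something at least as large, giving $b \geq a$.

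The heart of the argument is the reverse inequality $b \leq a$, for which I would produce a strongly linearly dependent subset of cardinality exactly $a$. I would take a linearly dependent subset $S = \{v_{i_1},\ldots,v_{i_a}\}$ of minimum cardinality $a$ and argue that $S$ is itself strongly linearly dependent. Fix a nontrivial dependency
\[
\lambda_1 v_{i_1} + \cdots + \lambda_a v_{i_a} = 0,
\]
with the $\lambda_j \in \FF_q$ not all zero. Suppose, toward a contradiction, that some coefficient vanishes, say $\lambda_j = 0$. Then the same relation witnesses that the proper subset $\{v_{i_\ell} : \lambda_\ell \neq 0\}$ is linearly dependent, and this subset has strictly smaller cardinality than $S$ (it omits $v_{i_j}$) while still being nonempty, since not all coefficients are zero. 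This contradicts the minimality of $a$. Hence every $\lambda_j$ is nonzero, so $S$ is strongly linearly dependent of cardinality $a$, yielding $b \leq a$.

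Combining the two inequalities gives $a = b$, which is the claim. I expect no genuine obstacle here: the one point that requires a moment's care is checking that the subset obtained by discarding the index with zero coefficient is still a \emph{nonempty} dependent set, which is guaranteed precisely because the original dependency is nontrivial. In matroid language this is simply the observation that a minimum-size dependent set is a circuit and hence has full support in any witnessing relation, but the elementary argument above avoids invoking any such machinery.
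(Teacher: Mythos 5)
Your proof is correct and follows essentially the same route as the paper: the easy direction is that strong linear dependence implies linear dependence, and for the converse both you and the authors take a minimum-cardinality linearly dependent set, fix a nontrivial relation, and observe that a vanishing coefficient would produce a strictly smaller dependent set, contradicting minimality. Your write-up is slightly more careful (making the two inequalities $a\leq b$ and $b\leq a$ explicit and checking nonemptiness of the smaller set), but the underlying argument is identical.
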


\begin{proof}
By Definition~\ref{25.02.29}, s.l.d. implies l.d. We now check the converse. Let $m$ be the minimum number of l.d. elements. Without loss of generality, assume that $v_1,\ldots,v_m$ are l.d. Then there exist $\lambda_1,\ldots,\lambda_m$ in $\FF_q$ such that
\[\lambda_1v_1 + \cdots + \lambda_mv_m = 0.\]
If one of the $\lambda_i$'s is $0$, then the minimum number of l.d. elements is smaller than $m$. Hence $v_1,\ldots,v_m$ are s.l.d.
\end{proof}

The next result follows by combining Proposition~\ref{25.02.26}, Proposition~\ref{25.02.27}, and Lemma~\ref{25.02.28}.

\begin{corollary}\label{25.03.01}
Let $V=\FF_q^n$ and let $d$ be the Hamming distance. Let $C \subseteq V$ be a linear code with parity check matrix $H$. The maximum and minimum weights of $C$ are, respectively, the maximum and minimum number of strongly linearly dependent columns of $H$.
\end{corollary}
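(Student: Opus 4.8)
The plan is to treat the maximum-weight and minimum-weight claims separately, reading each off from the parity check matrix $H$ by assembling the three cited results; no new idea is needed beyond correctly chaining them.

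The maximum-weight half is immediate. Proposition~\ref{25.02.27} already states that $\maxwt(C)$ equals the maximum number of strongly linearly dependent columns of $H$, so there is nothing further to do on this side.

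For the minimum-weight half I would argue in three linked steps. First recall that for a linear code the minimum weight coincides with the minimum distance, since by definition $d_{\min}(C)=\min\{\wt(v)\ :\ v\in C\setminus 0\}$. Next apply Proposition~\ref{25.02.26}, which identifies $d_{\min}(C)$ with the minimum number of linearly dependent columns of $H$. The only remaining task is to upgrade ``linearly dependent'' to ``strongly linearly dependent'' in this minimal count: this is exactly Lemma~\ref{25.02.28}, applied to the set of columns of $H$ regarded as vectors in $\FF_q^{\,n-k}$ (the lemma is stated for vectors in an arbitrary $\FF_q^N$, so the particular ambient dimension plays no role). Composing these equalities gives that the minimum weight of $C$ equals the minimum number of strongly linearly dependent columns of $H$, which completes the statement.

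There is no genuine obstacle here, as the corollary is a formal consequence of the three quoted results; the only points requiring a moment's attention are to invoke Lemma~\ref{25.02.28} on the columns of $H$ rather than on the codewords of $C$, and to use that minimum distance equals minimum weight for linear codes. (One may optionally note that the degenerate case $C=\FF_q^n$, where $H=(0\;\cdots\;0)$, is also covered: every nonempty set of columns is then strongly linearly dependent, so the maximum and minimum counts are $n$ and $1$, matching $\maxwt(\FF_q^n)=n$ and $d_{\min}(\FF_q^n)=1$.)
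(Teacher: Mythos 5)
Your proposal is correct and follows exactly the route the paper intends: the paper gives no explicit proof, stating only that the corollary ``follows by combining'' Proposition~\ref{25.02.26}, Proposition~\ref{25.02.27}, and Lemma~\ref{25.02.28}, which is precisely the chaining you carry out (Proposition~\ref{25.02.27} for the maximum weight; Proposition~\ref{25.02.26} plus Lemma~\ref{25.02.28} applied to the columns of $H$ for the minimum weight). Your added remarks on the ambient dimension in Lemma~\ref{25.02.28} and the degenerate case $C=\FF_q^n$ are sound and only make the implicit argument more explicit.
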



\section{Main properties}\label{main}

In this section, we introduce code distances, a new family of invariants for linear codes. We will show that, in the case of linear block codes or rank-metric codes, the code distances are not invariants of the matroid or $q$-polymatroid associated with the code. In particular, the code distances sometimes allow us to distinguish codes that cannot be distinguished by looking at (any invariant of) the associated matroid or $q$-polymatroid. For example, code distances will enable us to distinguish some inequivalent MDS and MRD codes.

By considering the sequence of minimum distances of subcodes, we obtain a natural family of invariants of a code, which contains both its minimum distance and maximum weight.

\begin{definition}[Subcode distances]
Let $C\subseteq V$ be a $k$-dimensional code. For $1\leq i\leq k$, we define the {\bf $i$-th subcode distance} of $C$ as
$$\alpha_i(C)=\max\{d_{\min}(D)\ :\ D\subseteq C,\ \dim D=i\}.$$
\end{definition}

We can define further invariants by considering supercodes of $C$.

\begin{definition}[Supercode distances]
Let $C\subseteq V$ be a $k$-dimensional code. For $k\leq i\leq n$, we define the {\bf $i$-th supercode distance} of $C$ as
$$\alpha_i(C)=\max\{d_{\min}(D)\ :\ D\supseteq C,\ \dim D=i\}.$$
We refer to the subcode and supercode distances as the {\bf code distances}.
\end{definition}

Recall that two linear codes $C, D \subseteq V$ are {\bf equivalent} if there is a distance-preserving, $\FF_q$-linear isomorphism of $V$ with itself that maps $C$ to $D$. Observe that code distances are invariant under equivalence.

\begin{example}\label{25.01.01}
Let $V=\mathbb{F}_2^3$ and let $d$ be the Hamming distance. If $C=\mathbb{F}_2^3$, then
$$\begin{array}{ccc}
\alpha_1(C)=&d_{\min}(\langle (1,1,1)\rangle)&=3,\\
\alpha_2(C)=&d_{\min}(\langle (1,1,0),\ (0,1,1)\rangle)&=2,\\
\alpha_3(C)=&d_{\min}(C)&=1.
\end{array}$$
\end{example}

\begin{example}\label{25.01.02}
Let $V=\mathbb{F}_2^4$ and let $d$ be the Hamming distance. If $C=\mathbb{F}_2^4$, then
$$\begin{array}{ccc}
\alpha_1(C)=&d_{\min}(\langle (1,1,1,1)\rangle)&=4,\\
\alpha_2(C)=&d_{\min}(\langle (1,1,0,0),\ (0,0,1,1)\rangle)&=2,\\
\alpha_3(C)=&d_{\min}(\langle (1,1,0,0),\ (0,1,1,0),\ (0,0,1,1)\rangle)&=2,\\
\alpha_4(C)=&d_{\min}(C)&=1.
\end{array}$$
\end{example}

We now establish some basic properties of code distances.

\medskip

It is immediate from the definition that
\[
\alpha_1(C)=\maxwt(C) \;\mbox{ and }\; \alpha_k(C)=d_{\min}(C).
\]

\begin{proposition}\label{alphainequality}
Let $C \subseteq V$ be a $k$-dimensional code. Then,
$$\maxwt(C)=\alpha_1(C)\geq \alpha_2(C)\geq\cdots\geq\alpha_k(C)=d_{\min}(C)\geq \cdots\geq\alpha_n(C)=d_{\min}(V).$$
\end{proposition}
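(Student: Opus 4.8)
The plan is to prove the three endpoint equalities first and then obtain the entire chain of inequalities from a single monotonicity principle. The two outer equalities are immediate from the definitions: the only $k$-dimensional subcode of $C$ is $C$ itself, so $\alpha_k(C)=d_{\min}(C)$, and the only $n$-dimensional supercode of $C$ (with $n=\dim V$) is $V$ itself, so $\alpha_n(C)=d_{\min}(V)$. For $\alpha_1(C)=\maxwt(C)$, I would note that a one-dimensional subcode has the form $\langle v\rangle$ for some $0\neq v\in C$, and that for the Hamming, rank, and sum-rank metrics the weight is invariant under multiplication by a nonzero scalar, so $d_{\min}(\langle v\rangle)=\wt(v)$; maximizing over $v\in C$ then gives exactly $\maxwt(C)$.

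The key tool for the inequalities is the observation that minimum distance is monotone under inclusion: if $D'\subseteq D$ are nonzero codes, then $D'\setminus 0\subseteq D\setminus 0$, so the minimum weight taken over the smaller set can only be larger, that is $d_{\min}(D')\geq d_{\min}(D)$. I would record this as a one-line lemma before the main argument.

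For the subcode range $1\leq i\leq k-1$, I would select an $(i+1)$-dimensional subcode $D\subseteq C$ attaining $\alpha_{i+1}(C)=d_{\min}(D)$, and then pick any $i$-dimensional subspace $D'\subseteq D$. Since $D'\subseteq D\subseteq C$ and $\dim D'=i$, the space $D'$ is admissible in the definition of $\alpha_i(C)$, and the monotonicity lemma yields $\alpha_i(C)\geq d_{\min}(D')\geq d_{\min}(D)=\alpha_{i+1}(C)$.

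For the supercode range $k\leq i\leq n-1$, I would select an $(i+1)$-dimensional supercode $D\supseteq C$ attaining $\alpha_{i+1}(C)=d_{\min}(D)$. The one point requiring care, and the only genuine obstacle, is to produce an intermediate space of dimension $i$ that \emph{still contains} $C$ rather than merely an arbitrary $i$-dimensional subspace of $D$: since $C\subseteq D$ with $\dim C=k\leq i<i+1=\dim D$, I would extend a basis of $C$ to a basis of $D$ and delete one of the added basis vectors, obtaining $D'$ with $C\subseteq D'\subseteq D$ and $\dim D'=i$. Then $D'$ is admissible for $\alpha_i(C)$, and monotonicity again gives $\alpha_i(C)\geq d_{\min}(D')\geq d_{\min}(D)=\alpha_{i+1}(C)$. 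Concatenating these inequalities over $1\leq i\leq n-1$ and inserting the endpoint equalities produces the full chain.
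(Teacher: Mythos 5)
Your proposal is correct and follows essentially the same route as the paper's proof: pick a code attaining $\alpha_{i+1}(C)$, pass to an intermediate space of dimension $i$ (containing $C$ in the supercode range), and use monotonicity of $d_{\min}$ under inclusion. The extra details you supply---the explicit endpoint equalities, the scalar-invariance of the weight needed for $\alpha_1(C)=\maxwt(C)$, and the basis-extension construction of the intermediate supercode---are points the paper leaves implicit, but they do not change the argument.
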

\begin{proof}
For $1\leq i\leq k-1$, let $D\subseteq C$ have $\dim(D)=i+1$ and $d_{\min}(D)=\alpha_{i+1}(C)$. Let $E\subseteq D$ be such that $\dim(E)=i$. Then,
$$\alpha_{i+1}(C)=d_{\min}(D)\leq d_{\min}(E)\leq \alpha_i(C).$$ For $k\leq i\leq n$, let $D\supseteq C$ have $\dim(D)=i+1$ and $d_{\min}(D)=\alpha_{i+1}(C)$. Let $C\subseteq E\subseteq D$ be such that $\dim(E)=i$. Then,
$$\alpha_{i+1}(C)=d_{\min}(D)\leq d_{\min}(E)\leq \alpha_i(C),$$
which completes the proof.
\end{proof}

The following examples show that the sequence of code distances is not, in general, strictly decreasing. For instance, the sequence can contain a constant string of arbitrary length. This behavior is in contrast with the generalized weights, which are strictly increasing in the Hamming metric, and which cannot contain constant strings of arbitrary length in the rank and sum-rank metrics.

\begin{example}\label{25.03.02}
The $k$-dimensional $q$-ary simplex code $C\subseteq\FF_q^n$, $n=q^{k-1}+\ldots+q+1$ is a linear code all of whose nonzero codewords have weight $q^{k-1}$. Hence, it has
$$\alpha_1(C)=\cdots=\alpha_k(C)=q^{k-1}.$$
For example, the $k$-dimensional binary simplex code $C\subseteq\FF_2^{2^k-1}$ has
$$\alpha_1(C)=\cdots=\alpha_k(C)=2^{k-1}.$$
\end{example}

Linear constant weight codes are also called equidistant codes. In the Hamming metric, they were characterized in~\cite{B84}, see also~\cite{KK23}.

\begin{example}\label{ex:constant}
Every $k$-dimensional constant weight code $C\subseteq\FF_q^n$ has $d_{\min}(C)=dq^{k-1}$ and is equivalent to the $d$-fold repetition of the $q$-ary simplex code, possibly extended by zeroes. This implies in particular that $n\geq d(q^{k-1}+\ldots+q+1)$. The code distances of such a code are
$$\alpha_1(C)=\ldots=\alpha_k(C)=dq^{k-1}.$$
\end{example}

One can easily compare some of the code distances of a code with those of a subcode.

\begin{proposition}\label{alphas of subcodes}
Let $C\subseteq D\subseteq V$ be linear codes. Then $$\alpha_i(C)\leq\alpha_i(D)\; \mbox{ for } 1\leq i\leq \dim(C)$$ and $$\alpha_i(C)\geq\alpha_i(D)\; \mbox{ for } \dim(D)\leq i\leq n.$$  
\end{proposition}

\begin{proof}
For $1\leq i\leq\dim(C)$, the thesis follows from observing that every subcode of $C$ is a subcode of $D$. For $\dim(D)\leq i\leq n$, the result follows from observing that every supercode of $D$ is a supercode of $C$. 
\end{proof}

The next example shows that Proposition~\ref{alphas of subcodes} cannot be extended. In other words, if $i$ is not in the appropriate range, no inequality between the $i$-th code distances of two nested codes may hold.

\begin{example}\label{25.03.04}
Let
\[C = \langle(1,1,1,0,0)\rangle\subseteq D=\langle(1,1,1,0,0),(0,0,1,1,0),(0,0,0,1,1)\rangle \subseteq\FF_2^5.\]
One can check that $$\alpha_2(C) = 3 > 2 = \alpha_2(D).$$
Let \[C = \langle(1,1,0,0,0)\rangle \subseteq D=\langle(1,1,1,0,0),(0,0,1,1,1),(1,1,0,0,0)\rangle \subseteq\FF_2^5.\]
An easy computation shows that $$\alpha_2(C) = 2 < 3 = \alpha_2(D).$$
\end{example}

\subsection{Hamming distance} 

In this section we focus on the case when $V=\FF_q^n$ and $d$ is the Hamming distance. Let $C \subseteq V$ be a linear code. The Singleton Bound applied to the subcodes or supercodes of $C$ provides a simple bound for the code distances.

\begin{proposition}\label{alpha for MDS}
Let $C \subseteq V$ be a code of dimension $k$. Then $$\alpha_i(C)\leq n-i+1$$ for $1\leq i\leq n$. Moreover
\begin{itemize}
\item For $1\leq i \leq k-1$, then $\alpha_i(C)=n-i+1$ if and only if $C$ contains an $i$-dimensional MDS code.
\item $\alpha_k(C)=n-k+1$ if and only if $C$ is MDS.
\item For $k+1 \leq i \leq n$, then $\alpha_i(C)=n-i+1$ if and only if $C$ is contained in an $i$-dimensional MDS code.
\end{itemize}
\end{proposition}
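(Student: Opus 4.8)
The plan is to prove the inequality $\alpha_i(C) \leq n-i+1$ first, and then handle the three equality characterizations, which will follow almost immediately once the inequality and the definitions are in place.

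\medskip

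\emph{The inequality.} For any fixed $i$ with $1 \leq i \leq n$, the quantity $\alpha_i(C)$ is by definition a maximum of $d_{\min}(D)$ over certain $i$-dimensional codes $D$ (subcodes when $i \leq k$, supercodes when $i \geq k$). In either case, every such $D$ is an $i$-dimensional linear code inside $V = \FF_q^n$, so the Singleton Bound stated in the Preliminaries gives $d_{\min}(D) \leq n - i + 1$. Taking the maximum over all admissible $D$ preserves the inequality, so $\alpha_i(C) \leq n-i+1$. This step is routine; the only thing to be careful about is that I am applying the Singleton Bound to $D$ (dimension $i$), not to $C$ (dimension $k$).

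\medskip

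\emph{The equality characterizations.} For each of the three ranges I argue that $\alpha_i(C) = n-i+1$ is equivalent to the existence of an admissible $i$-dimensional code $D$ that is itself MDS. The logic is the same in all three cases and hinges on the fact that $\alpha_i(C)$ is a maximum and that the bound $n-i+1$ is exactly the Singleton bound for dimension $i$. If some admissible $D$ is MDS, then $d_{\min}(D) = n-i+1$ and hence $\alpha_i(C) \geq n-i+1$; combined with the inequality above this forces equality. Conversely, if $\alpha_i(C) = n-i+1$, then because the maximum defining $\alpha_i(C)$ is attained (the collection of admissible $D$ is finite, as $V$ is a finite-dimensional space over a finite field), there is an admissible $D$ with $d_{\min}(D) = n-i+1$, i.e.\ an $i$-dimensional MDS code that is a subcode (resp.\ equals $C$, resp.\ a supercode). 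The three bullet points are then just the three ranges: for $1 \leq i \leq k-1$ admissible means subcode, so ``$C$ contains an $i$-dimensional MDS code''; for $i = k$ the only admissible code is $C$ itself, so the statement reduces to ``$C$ is MDS''; and for $k+1 \leq i \leq n$ admissible means supercode, so ``$C$ is contained in an $i$-dimensional MDS code.''

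\medskip

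I do not anticipate a serious obstacle here, since everything reduces to applying the Singleton Bound term-by-term inside the maximum and then unwinding the definition of $\alpha_i$. The one point that deserves an explicit remark is that the maximum in the definition of $\alpha_i(C)$ is actually achieved, which I would justify by noting that there are only finitely many $i$-dimensional subspaces $D$ of $V$ satisfying the relevant containment; this is what lets the converse direction produce a witnessing MDS code rather than merely a supremum. Splitting off the middle case $i = k$ is purely cosmetic, reflecting that $C$ is simultaneously its own subcode and supercode of dimension $k$, so that $\alpha_k(C) = d_{\min}(C)$ and the MDS condition on the witness collapses to the MDS condition on $C$ itself.
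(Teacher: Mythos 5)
Your proof is correct and follows exactly the paper's (implicit) argument: the paper states this proposition as an immediate consequence of applying the Singleton Bound to the subcodes and supercodes of $C$, which is precisely your term-by-term bound inside the maximum, together with the observation that the maximum is attained so equality is witnessed by an MDS sub- or supercode. Your explicit remarks on attainment of the maximum and on the cosmetic nature of the $i=k$ case are sound and fill in details the paper leaves to the reader.
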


When $q\geq n$, the Reed-Solomon codes provide examples of $i$-dimensional MDS codes in $\FF_q^n$ for any $1\leq i\leq n$. Thus, we obtain the following result.

\begin{corollary}\label{alpha RS code}
Assume $q \geq n$. If $C \subseteq V$ is a Reed-Solomon code, then
\[\alpha_i(C)=n-i+1 \quad \text{ for } \quad 1\leq i\leq n.\]
 In particular,
 \[\alpha_i(\FF_q^n)=n-i+1 \quad \text{ for } \quad 1\leq i\leq n.\]
\end{corollary}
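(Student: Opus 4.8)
The plan is to apply Proposition~\ref{alpha for MDS} directly, exploiting the fact that Reed--Solomon codes on a fixed evaluation set are both nested and MDS. Write $C=RS(A,k)$ for some $A=\{a_1,\dots,a_n\}\subseteq\FF_q$ of cardinality $n$; such a set exists precisely because $q\geq n$. The essential observation is that $\FF_q[x]_{<i}\subseteq\FF_q[x]_{<j}$ whenever $i\leq j$, so applying the evaluation map on $A$ gives $RS(A,i)\subseteq RS(A,j)$, and each $RS(A,i)$ is an $i$-dimensional MDS code by the definition recalled above. This supplies MDS codes of every dimension $1\leq i\leq n$ that sit inside, equal, or contain $C$, which is exactly the input Proposition~\ref{alpha for MDS} requires.

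First I would treat $1\leq i\leq k-1$: since $RS(A,i)\subseteq RS(A,k)=C$ is an $i$-dimensional MDS subcode, the first bullet of Proposition~\ref{alpha for MDS} yields $\alpha_i(C)=n-i+1$. The case $i=k$ is immediate from the second bullet, as $C$ is itself MDS. For $k+1\leq i\leq n$, I would instead use that $C=RS(A,k)\subseteq RS(A,i)$ is contained in the $i$-dimensional MDS code $RS(A,i)$, so the third bullet gives $\alpha_i(C)=n-i+1$. These three ranges together cover every $i$ with $1\leq i\leq n$, which proves the first claim.

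For the ``in particular'' statement, I would observe that $\FF_q^n$ is itself a Reed--Solomon code: the evaluation map on $\FF_q[x]_{<n}$ is injective, since a nonzero polynomial of degree less than $n$ cannot vanish at all $n$ points of $A$, hence it is an $\FF_q$-isomorphism onto $\FF_q^n$ and $\FF_q^n=RS(A,n)$. The formula $\alpha_i(\FF_q^n)=n-i+1$ is then the $k=n$ instance of the first part, where only subcode distances appear. I do not expect any real obstacle here; the corollary is essentially a bookkeeping consequence of Proposition~\ref{alpha for MDS}, and the only point deserving explicit mention is the nesting $RS(A,i)\subseteq RS(A,j)$, which is what guarantees the required MDS sub- and supercodes simultaneously.
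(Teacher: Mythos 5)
Your proposal is correct and follows the same route as the paper's own (commented-out) proof: both rest on the nestedness $RS(A,i)\subseteq RS(A,j)$ for $i\leq j$, which supplies the MDS subcodes and supercodes required by the three bullets of Proposition~\ref{alpha for MDS}, together with the observation that $\FF_q^n=RS(A,n)$ is itself a Reed--Solomon code. Your write-up merely spells out the bookkeeping that the paper leaves implicit.
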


A consequence of the last result is that an optimal anticode of dimension $k$ in $\FF_q^n$ has $\alpha_i=k-i+1$, provided that $q\gg 0$. For $q=2$, we have $\alpha_1(\FF_2^n)=n$ and $\alpha_2(\FF_2^n)=\lfloor 2n/3 \rfloor$.

\begin{remark}\label{25.03.05}
Notice that the code distances of a $k$-dimensional Reed-Solomon code do not depend on $k$. In particular, the code distances of a code do not determine its dimension or its minimum distance. The minimum distance is however determined, if the dimension is known.
\end{remark}

The next example is a code $C \subseteq V$ such that $\alpha_1(C)=n$ and $\alpha_k(C)=n-k+1$, which means that they both achieve the bound of Proposition~\ref{alpha for MDS}. However, $\alpha_i(C)$ does not achieve the bound for $1 < i < k$. In other words, there are MDS codes of maximum weight equal to the length, whose code distances are not all maximal. 

\begin{example}\label{ex:ternary422}
Let $V=\mathbb{F}_3^4$, $d$ the Hamming distance, and  
\[C=\langle (1,1,1,0),\ (0,1,2,0),\ (0,0,1,1)\rangle.\]
Then $C$ is a doubly extended Reed-Solomon code of dimension 3, whose dual is $\langle (1,1,1,2)\rangle$. One can easily check that $\alpha_1(C)=4$ and $\alpha_3(C)=2$. For any $D\subseteq C$ of dimension 2, the dual $D^\perp$ has a generator matrix of the form
$\begin{pmatrix}
1&1&1&2\\
a_1&a_2&a_3&a_4
\end{pmatrix}.$
If the elements $a_1,a_2,a_3$ are different, then the fourth column and one of the first three columns are linearly dependent. If two of the elements $a_1,a_2,a_3$ are equal, then two of the first three columns are linearly dependent. In either case, $d_{\min}(D)=2$ by Proposition~\ref{25.02.26}. Hence $\alpha_2(C)=2$.
Summarizing, we obtain
$$\begin{array}{ccc}
\alpha_1(C)=&d_{\min}(\langle (1,1,2,1)\rangle)&=4,\\
\alpha_2(C)=&d_{\min}(\langle (1,1,1,0),\ (0,0,1,1)\rangle)&=2,\\
\alpha_3(C)=&d_{\min}(C)&=2.
\end{array}$$
\end{example}

\begin{remark}
We already observed that the sequence of code distances of a code $C$ may contain a constant subsequence. However, $C$ may or may not contain a constant weight subcode. For example, the even weight code in $\FF_2^4$ whose parity check matrix is $(1,1,1,1)$ and the code $C\subseteq\mathbb{F}_3^4$ of Example~\ref{ex:ternary422} have the same code distances:
\[\alpha_1(C)=4, \qquad \alpha_2(C)=2, \qquad \alpha_3(C)=2, \qquad \text{ and } \qquad \alpha_4(C)=1.\]
Nevertheless, the even weight code contains the two-dimensional constant weight code $\langle (1,1,0,0), (0,1,1,0)\rangle$, while $C$ does not contain any two-dimensional constant weight code of weight two. In fact, no such code exists for $q=3$ (see Example~\ref{ex:constant}).
Thus, the existence of a constant subsequence in the code distances is not necessarily connected to the existence of a constant weight subcode.
\end{remark}

\begin{remark}\label{rmk:subcode}
As there is no code in $\mathbb{F}_3^4$ whose sequence of subcode distances is $2,2$, it is not true in general that every subsequence of the sequence of subcode distances of a code $C\subseteq V$ is itself the sequence of subcode distances of some code $D\subseteq V$.
An even simpler example is $\FF_2^3$, whose sequence of subcode distances is $3,2,1$, but which does not contain any code whose sequence of subcode distances is $3,2$. This is in contrast with the case of generalized weights of a linear block code, rank-metric or sum-rank metric code, for which such a result holds under suitable assumptions, see~\cite[Theorem~3.1, Theorem~4.10, and Theorem~4.13]{GLMS23}.
\end{remark}

Next we give an example of MDS codes with the same length, dimension, and minimum distance, but different code distances. This shows that code distances may allow us to distinguish non-equivalent MDS codes.

\begin{example}[Reed-Solomon codes and twisted Reed-Solomon codes]\label{ex:RSvsERS}
Let $C=RS(\FF_9,4)$ be a Reed-Solomon code and let $D = \left\{ f(\FF_9) : f \in \langle 1,x,x^2-ax^6,x^3 \rangle \right\},$ where $a$ is a primitive element of $\mathbb{F}_9$. Both are MDS codes over $\FF_9$ of length 9 and dimension 4. However,
$$\alpha_1(C)=9,\ \alpha_2(C)=8,\ \alpha_3(C)=7,\ \alpha_4(C)=6,$$
$$\alpha_1(D)=9,\ \alpha_2(D)=8,\ \alpha_3(D)=6,\ \alpha_4(D)=6.$$
This proves that $C$ and $D$ are not equivalent. 
\end{example}

\begin{theorem}\label{thm:notinvariant}
Let $V=\mathbb{F}_q^n$ and $d$ be the Hamming metric. The sequence of code distances is not an invariant of the matroid associated with a code.
\end{theorem}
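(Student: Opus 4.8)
The plan is to exhibit two linear block codes $C$ and $D$ that have the same associated matroid but different sequences of code distances. Recall that the matroid of a linear code is determined by the dependence structure of the columns of a generator matrix; equivalently, two codes have the same matroid precisely when their generator matrices have the same collection of linearly dependent column sets, which in turn fixes the generalized weights and the weight distribution. Since code distances need not be matroid invariants, it suffices to find one concrete pair realizing the same matroid but different $\alpha_i$ for some $i$. The most economical route is to reuse the data already assembled in the preceding examples: the Reed-Solomon code $C = RS(\FF_9, 4)$ and the twisted variant $D$ of Example~\ref{ex:RSvsERS} are both MDS codes of the same length, dimension, and minimum distance, so they have the same matroid (the uniform matroid $U_{4,9}$, the unique matroid of a $[9,4]$ MDS code), yet the excerpt already records that $\alpha_3(C) = 7 \neq 6 = \alpha_3(D)$.

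The key steps, in order, are as follows. First I would recall the standard fact that all codes with identical generalized weights that are MDS of given parameters share the same matroid; more specifically, any $[n,k]$ MDS code has as its matroid the uniform matroid $U_{k,n}$, in which every $k$-subset of the ground set is a basis, because MDS is equivalent to every set of $k$ columns of a generator matrix being linearly independent. Thus both $C$ and $D$ from Example~\ref{ex:RSvsERS}, being $[9,4]$ MDS codes over $\FF_9$, realize $U_{4,9}$ and hence have the same matroid. Second, I would simply invoke the computation already carried out in Example~\ref{ex:RSvsERS}, namely $\alpha_3(C) = 7$ while $\alpha_3(D) = 6$. The discrepancy in the third subcode distance shows that the sequence $(\alpha_i)$ cannot be a function of the matroid alone, which is exactly the claim.

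I expect the only substantive point requiring care to be the assertion that two $[n,k]$ MDS codes have the \emph{same} matroid; beyond that, the argument is a direct appeal to the prior example, so no new calculation is needed. The MDS-to-$U_{k,n}$ identification is genuinely the crux, and it is also genuinely easy: a code is MDS if and only if every $k$ columns of a generator matrix are independent, which is the defining property of the uniform matroid. I would state this explicitly so the reader sees that the matroid is \emph{literally} identical (not merely that some invariants agree), and then the different values of $\alpha_3$ deliver the theorem. If one wanted a self-contained small example over a smaller field rather than citing Example~\ref{ex:RSvsERS}, the same strategy applies: any two inequivalent MDS codes with differing code distances will do, but since the needed pair is already in hand, reusing it keeps the proof short.
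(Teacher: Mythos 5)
Your proof is correct and follows exactly the paper's own argument: the paper likewise invokes Example~\ref{ex:RSvsERS} together with the fact that the matroid of an MDS code is determined by its parameters (namely, it is the uniform matroid $U_{k,n}$, since MDS means every $k$ columns of a generator matrix are independent). Your write-up merely spells out the MDS-to-$U_{k,n}$ identification that the paper leaves implicit.
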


\begin{proof}
Since the matroid associated with an MDS code is determined by the code parameters, Example~\ref{ex:RSvsERS} shows that code distances are not invariants of the matroid associated with the code. 
\end{proof}

The sequence of code distances is an invariant of a linear block code which is not an invariant of the associated matroid. Other such invariants are the covering radius, the Netwon radius, and the dimension of the hull. The maximum weight of a code is also not an invariant of the matroid associated with the code. For a given $q$, however, it is determined by the matroid, as the weight enumerator is. For $q=2$, the circuits of the matroid generate the code to which it is associated. In particular, every invariant of a binary code is an invariant of the associated matroid, as the code and the matroid are equivalent information in this case.

\medskip

It is well-known that the generalized weights of a linear block code determine those of its dual. A similar result holds for rank-metric codes. This is not the case, however, for the generalized weights of sum-rank metric codes, see~\cite[Example V.10]{CGLLMS22}. The following example shows that the code distances of a linear block code do not determine the code distances of the dual.

\begin{example}\label{ex:duality}
Let $C_1,C_2\subseteq\FF_2^5$ be the binary codes with generator matrices 
$$G_1=\begin{pmatrix} 1&1&1&1&0\\
0&0&0&1&1
\end{pmatrix} \qquad \text{ and } \qquad
G_2=\begin{pmatrix}1&1&1&1&0\\
0&0&1&1&0\end{pmatrix},$$
respectively. It is easy to check that 
\[\alpha_1(C_i)=4, \quad \alpha_2(C_i)=2, \quad \alpha_3(C_i)=2, \quad \alpha_4(C_i)=2, \quad \text{ and } \quad \alpha_5(C_i)=1\] for $i=1,2$.
The dual codes $C_1^\perp$ and $C_2^\perp$ are generated, respectively, by
$$G_1^\perp=\begin{pmatrix} 1&1&0&0&0\\
0&1&1&0&0\\
0&0&1&1&1
\end{pmatrix} \qquad \text{ and } \qquad 
G_2^\perp=\begin{pmatrix}0&0&0&0&1\\
1&1&0&0&0\\
0&0&1&1&0\end{pmatrix}.$$
One can check that
\[\alpha_1(C_1^\perp)=5, \quad \alpha_2(C_1^\perp)=3, \quad \alpha_3(C_1^\perp)=2, \quad \alpha_4(C_1^\perp)=1, \quad \text{ and } \quad \alpha_5(C_1^\perp)=1\]
\[\alpha_1(C_2^\perp)=5, \quad \alpha_2(C_2^\perp)=2, \quad \alpha_3(C_2^\perp)=1, \quad \alpha_4(C_2^\perp)=1, \quad \text{ and } \quad \alpha_5(C_2^\perp)=1.\]
\end{example}

The next proposition easily follows from the definition.

\begin{proposition}
Let $D\subseteq\FF_q^{n-1}$ be a puncturing of $C\subseteq\FF_q^n$. Then,
$$\alpha_i(D)\geq\alpha_i(C)-1.$$
\end{proposition}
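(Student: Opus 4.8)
The plan is to produce, for each index $i$ in the common range $1\leq i\leq n-1$, an explicit $i$-dimensional subcode or supercode of $D$ whose minimum distance is at least $\alpha_i(C)-1$. Write the puncturing as the coordinate projection $\pi\colon\FF_q^n\to\FF_q^{n-1}$ that deletes a fixed coordinate, which I may assume to be the last one since code distances are invariant under coordinate permutations, so that $D=\pi(C)$. The basic mechanism is that $\pi$ lowers the weight of any vector by at most one: for every $v\in\FF_q^n$ we have $\wt(\pi(v))\geq\wt(v)-1$.

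First I would dispose of the trivial case. If $\alpha_i(C)\leq 1$, then $\alpha_i(C)-1\leq 0\leq\alpha_i(D)$ and there is nothing to prove. So assume $\alpha_i(C)\geq 2$, and let $E$ be an $i$-dimensional subcode (when $1\leq i\leq\dim C$) or supercode (when $\dim C\leq i\leq n-1$) of $C$ with $d_{\min}(E)=\alpha_i(C)$.

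The key observation is that $\pi$ restricts to an injection on $E$. Indeed, if $\pi(v)=0$ for some $v\in E$, then $v$ is supported only on the deleted coordinate, hence $\wt(v)\leq 1$; since $d_{\min}(E)=\alpha_i(C)\geq 2$, this forces $v=0$. Therefore $\dim\pi(E)=\dim E=i$, and for every nonzero $w=\pi(v)\in\pi(E)$ the preimage $v\in E$ is nonzero, so $\wt(w)\geq\wt(v)-1\geq d_{\min}(E)-1=\alpha_i(C)-1$; thus $d_{\min}(\pi(E))\geq\alpha_i(C)-1$. It then remains to check that $\pi(E)$ is an admissible competitor for $\alpha_i(D)$: it is an $i$-dimensional code inside $\FF_q^{n-1}$, it satisfies $\pi(E)\subseteq\pi(C)=D$ when $E\subseteq C$ and $\pi(E)\supseteq\pi(C)=D$ when $E\supseteq C$, and in the supercode range the hypothesis $i\leq n-1$ guarantees that $\pi(E)$ fits in the ambient space while $i\geq\dim C\geq\dim D$ keeps it a genuine supercode. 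Hence $\alpha_i(D)\geq d_{\min}(\pi(E))\geq\alpha_i(C)-1$.

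The only real subtlety, and the step I would flag, is the possible drop in dimension under puncturing: in general $\dim\pi(C)$ can equal $\dim C-1$, namely when $C$ contains a scalar multiple of the deleted basis vector, which would spoil the dimension bookkeeping and could force $\alpha_i(D)$ to be a supercode distance where $\alpha_i(C)$ is a subcode distance. The case split above resolves this cleanly, since a dimension drop on $E$ can only occur when $E$ contains a weight-one codeword, that is, precisely when $\alpha_i(C)\leq 1$, which has already been treated separately.
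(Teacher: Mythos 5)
Your proof is correct and fills in exactly the argument the paper leaves implicit (the paper states this proposition without proof, remarking only that it ``easily follows from the definition''): take an optimal $i$-dimensional sub- or supercode $E$ of $C$, project it under the puncturing map, and use that deleting a coordinate lowers weights by at most one. Your explicit handling of the dimension-drop issue via the case split on $\alpha_i(C)\geq 2$ is the one genuinely non-trivial point in this ``easy'' argument, and you resolve it correctly, since a drop in $\dim\pi(E)$ forces a weight-one codeword in $E$ and hence $\alpha_i(C)=1$.
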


The next example shows that the code distances of $C$ do not determine the code distances of the puncturings of $C$. This is in contrast with what happens for the matroid associated with a code, which determines the matroids of all the puncturings of the code. The example also shows that the code distances do not detect whether the code is degenerate.

\begin{example}\label{ex:shortening}
Let $C_1$ and $C_2$ be the binary codes defined in Example~\ref{ex:duality}. The shortening of the last coordinate of $C_2$ yields a code of dimension $2$, while every shortening of $C_1$ has dimension $1$. In addition, puncturing $C_2$ in the last coordinate yields a code with minimum distance $2$ and maximum weight $4$. Puncturing $C_1$ in one of the first three coordinates yields a code with minimum distance $2$ and maximum weight $3$. Finally, puncturing $C_1$ in one of the last two coordinates yields a code with minimum distance $1$ and maximum weight $4$.
\end{example}

\begin{example}
Let $C\subseteq \mathbb{F}_2^n$ be the even weight code, i.e., the parity check matrix of $C$ is $H=(1,\ldots, 1) \in \FF_2^n$. Then
$$\begin{array}{rcl}
\alpha_1(C)=&\maxwt(C)&=2 \displaystyle \left\lfloor n/2\right\rfloor,\\
\alpha_2(C)=&&=2 \displaystyle \left\lfloor n/3 \right\rfloor,\\
\alpha_{n-3}(C)=& &=2 \text{ if } n\geq 5,\\
\alpha_{n-2}(C)=& &=2 \text{ if } n\geq 3,\\
\alpha_{n-1}(C)=&d_{\min}(C)&=2,\\
\alpha_n(C)=&d_{\min}(\FF_2^n)&=1.
\end{array}$$
The computations of $\alpha_1(C)$, $\alpha_{n-1}(C)$, and $\alpha_n(C)$ are straightforward from the definitions of subcode and supercode distances. If $n=2$, then these are all the code weights. Hence, from now on, suppose that $n\geq 3$.

For brevity, we write $\alpha_2$ instead of $\alpha_2(C)$. Any two-dimensional subcode of $C$ has a basis of two codewords of weights $\alpha_2$ and $w$ with $\alpha_2\leq w$ and whose supports intersect in $t$ components, where $\alpha_2+w-t\leq n$ and $\alpha_2+w-2t\geq\alpha_2$. It follows that $2\alpha_2+2w-2n\leq w$, from which $\alpha_2\leq w\leq 2n-2\alpha_2$. Since $\alpha_2$ is even, this shows that $\alpha_2\leq 2\left\lfloor\frac{n}{3}\right\rfloor$. Equality follows from checking that the subcode of $C$ generated by the two codewords of weight $\alpha_2$ supported on the first and the last $\alpha_2$ entries, respectively, has minimum distance $\alpha_2$.

The computation of $\alpha_{n-2}(C)$ follows from observing that, since $C$ contains the two-dimensional constant weight code $\Lambda=\langle (1,1,0,\ldots,0),(0,1,1,0,\ldots,0)\rangle$, every $(n-2)$-dimensional subcode $D$ of $C$ must intersect $\Lambda$ nontrivially, hence it must contain a codeword of weight $2$.

We now compute $\alpha_{n-3}(C)$, under the assumption that $n \geq 5$. Let $D$ be an $(n-3)$-dimensional subcode of $C$ and consider again the two-dimensional constant weight code $\Lambda=\langle (1,1,0,\ldots,0),(0,1,1,0,\ldots,0)\rangle$. If $D\cap\Lambda\neq 0$, then $d_{\min}(D)=2$. Else, let $u=(1,1,1,1,0,\ldots,0)$, $v=(0,1,1,1,1,0,\ldots,0)$, and observe that $u,v\in C=D+\Lambda$. If $u\not\in D$, then $u+x\in D$ for some $x\in \Lambda\setminus 0$ and $\wt(u+x)=2$. We can argue similarly for $v$ and deduce that if $v\not\in D$, then $D$ contains a codeword of weight $2$. However, if $u,v\in D$, then $u+v\in D$ has $\wt(u+v)=2$. In either case $d_{\min}(D)=2$.

\smallskip

Moreover, for $n=7$ we have $\alpha_1(C)=6$, $\alpha_2(C)=4$, $\alpha_3(C)=\alpha_4(C)=\alpha_5(C)=\alpha_6(C)=2$. Finally, for $n=8$ we have $\alpha_1(C)=6$, $\alpha_2(C)=\alpha_3(C)=4$, $\alpha_4(C)=\alpha_5(C)=\alpha_6(C)=\alpha_7(C)=2$.
\end{example}

\subsection{Rank and sum-rank metric}

In this section we focus on the case when $V=\mathbb{F}_q^{m\times n}$, $n\leq m$, and $d$ is the rank metric.
We start by observing that the Singleton Bound applied to the subcodes or supercodes provides a bound for the code distances. 

\begin{proposition}\label{alpha for MRD}
Let $V=\mathbb{F}_q^{m\times n}$, $n\leq m$, and let $d$ be the rank metric. Let $C \subseteq V$ be a code of dimension $k$. Then $$\alpha_i(C)\leq n-\left\lfloor\frac{i}{m}\right\rfloor+1$$ for $1\leq i\leq k$. Moreover
\begin{itemize}
\item If $1\leq i \leq k$, then $\alpha_i(C)= n-\left\lfloor\frac{i}{m}\right\rfloor+1$ if and only if $C$ contains an $i$-dimensional quasi-MRD code. If $m\mid i$, this is equivalent to $C$ containing an $i$-dimensional MRD code.
\item If $k \leq i \leq n$, then $\alpha_i(C)= n-\left\lfloor\frac{i}{m}\right\rfloor+1$ if and only if $C$ is contained in an $i$-dimensional quasi-MRD code. If $m\mid i$, this is equivalent to $C$ being contained in an $i$-dimensional MRD code.
\end{itemize}
\end{proposition}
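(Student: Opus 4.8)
The plan is to closely follow the argument behind Proposition~\ref{alpha for MDS}, substituting the rank-metric Singleton Bound recorded in the preliminaries for its Hamming analogue. The structural situation is identical: $\alpha_i(C)$ is a maximum of minimum distances taken over a finite family of $i$-dimensional codes, each member of which obeys a Singleton-type inequality. Hence the upper bound on $\alpha_i(C)$ is inherited from the bound on the individual members, and the characterization of equality reduces to recognizing exactly when a member of that family meets the Singleton Bound with equality.

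For the bound itself I would fix $i$ and let $D$ range over the $i$-dimensional subcodes of $C$ when $1\le i\le k$, and over the $i$-dimensional supercodes of $C$ when $k\le i\le n$; in both ranges this family is finite and nonempty (for supercodes, since $i\le n\le mn$, codes of $\FF_q$-dimension $i$ containing $C$ exist). Applying the rank-metric Singleton Bound to each such $D$ gives $d_{\min}(D)\le n-\lceil i/m\rceil+1\le n-\lfloor i/m\rfloor+1$, and since $\alpha_i(C)$ is by definition the maximum of the values $d_{\min}(D)$ over this family, the same upper bound passes to $\alpha_i(C)$. Note that the two roundings agree precisely when $m\mid i$.

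For the equality characterization I would argue in both directions. If $C$ contains (respectively, is contained in) an $i$-dimensional code $D$ that meets the Singleton Bound with equality, then by definition $D$ is a QMRD code, or an MRD code when $m\mid i$; its $d_{\min}(D)$ realizes the bound, so $\alpha_i(C)$ is at least, hence equal to, the bound. Conversely, if $\alpha_i(C)$ equals the bound, then by finiteness the maximum in the definition is attained by some $i$-dimensional subcode (respectively supercode) $D$ with $d_{\min}(D)$ equal to the bound; this $D$ meets the Singleton Bound with equality and is therefore QMRD (MRD if $m\mid i$), which is exactly the asserted condition. The overlap value $i=k$, where $\alpha_k(C)=d_{\min}(C)$ and both bullets collapse to the statement that $C$ itself is QMRD/MRD, is consistent with this.

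The one point genuinely requiring care—and the step I expect to be the crux—is the interplay between the rounding in the Singleton Bound and the definition of QMRD codes: the sharp Singleton Bound produces $\lceil i/m\rceil$, which collapses to $\lfloor i/m\rfloor$ exactly when $m\mid i$, and it is precisely attaining this sharp value that defines the QMRD/MRD property driving the equality characterization. Keeping this bookkeeping straight is where the argument should be written out carefully rather than merely quoted from the MDS case.
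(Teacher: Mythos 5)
Your route---apply the rank-metric Singleton Bound to each member of the finite, nonempty family of $i$-dimensional subcodes (resp.\ supercodes) of $C$, and characterize equality as some member of the family attaining that bound---is exactly the intended argument; the paper states this proposition without further proof, as the direct analogue of Proposition~\ref{alpha for MDS}. The genuine gap lies in the rounding issue that you yourself flag as ``the crux'' but then never resolve. Your equality argument silently identifies the stated bound $n-\lfloor i/m\rfloor+1$ with the sharp Singleton value $n-\lceil i/m\rceil+1$, but these differ by exactly $1$ whenever $m\nmid i$. In that case an $i$-dimensional quasi-MRD code $D$ satisfies $d_{\min}(D)=n-\lceil i/m\rceil+1=n-\lfloor i/m\rfloor$, so containing such a $D$ only yields $\alpha_i(C)\geq n-\lfloor i/m\rfloor$, not the claimed equality with $n-\lfloor i/m\rfloor+1$; and conversely, $\alpha_i(C)=n-\lfloor i/m\rfloor+1$ can never occur for $m\nmid i$, because it would violate the very Singleton Bound you invoke for the upper estimate. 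So for $m\nmid i$ the left-hand side of the asserted equivalence is always false while the right-hand side can be true (e.g.\ $C$ itself quasi-MRD with $i=k$), and both implications in your proof break down; no bookkeeping repairs them with the floor in place.

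What your argument does prove, essentially verbatim, is the statement with $\lceil i/m\rceil$ throughout: $\alpha_i(C)\leq n-\lceil i/m\rceil+1$, with equality if and only if $C$ contains (resp.\ is contained in) an $i$-dimensional code attaining the Singleton Bound, i.e.\ a quasi-MRD code, which is MRD exactly when $m\mid i$. That the ceiling is the intended reading is confirmed by Corollary~\ref{25.03.06}: with the floor, a Gabidulin code would have $\alpha_1(C)=n-0+1=n+1$, exceeding the maximal possible rank weight $n$, whereas the ceiling gives $\alpha_1(C)=n$, as it must (and the chain of nested Gabidulin codes then realizes $\alpha_i(C)=n-\lceil i/m\rceil+1$ for every $i$). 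The correct write-up is therefore to prove the ceiling version---where your two directions go through: a quasi-MRD/MRD subcode or supercode gives the lower bound, and finiteness of the family produces an attaining $D$ for the converse---and to remark that $\lceil i/m\rceil=\lfloor i/m\rfloor$ precisely when $m\mid i$; proving the floor version as literally stated is not possible.
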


\begin{corollary}\label{25.03.06}
Let $V=\mathbb{F}_q^{m\times n}$, $n\leq m$, and let $d$ be the rank metric. Let $C\subseteq\FF_q^{m\times n}$ be a Gabidulin code. Then
\[\alpha_i(C)=n-\left\lfloor\frac{i}{m}\right\rfloor+1 \quad \text{ for } \quad 1\leq i\leq mn.\]
\end{corollary}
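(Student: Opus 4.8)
The plan is to deduce the corollary from Proposition~\ref{alpha for MRD} in exactly the same way that Corollary~\ref{alpha RS code} is deduced from Proposition~\ref{alpha for MDS} in the Hamming case. Proposition~\ref{alpha for MRD} reduces the determination of each $\alpha_i(C)$ to an \emph{existence} question: writing $K=\dim_{\FF_q}(C)$, one must show that for $1\le i\le K$ the code $C$ contains an $i$-dimensional quasi-MRD code, and that for $K\le i\le mn$ the code $C$ is contained in an $i$-dimensional quasi-MRD code (an MRD code when $m\mid i$). Once such codes are exhibited for every $i$, Proposition~\ref{alpha for MRD} assigns the value $n-\lfloor i/m\rfloor+1$ to $\alpha_i(C)$ for all $1\le i\le mn$, which is exactly the asserted formula. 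Thus the entire argument rests on an existence statement about quasi-MRD sub- and supercodes of $C$.

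First I would record the nested structure of Gabidulin codes. If $C$ is the Gabidulin code of length $n$ and dimension $k$ defined by the Moore matrix $G$ on $\FF_q$-linearly independent elements $a_1,\ldots,a_n\in\FF_{q^m}$, then for $0\le j\le n$ truncating $G$ to its first $j$ rows produces a Gabidulin code $G_j$, and these form a chain $0=G_0\subset G_1\subset\cdots\subset G_n$ with $G_k=C$. Each $G_j$ is MRD with $\dim_{\FF_q}(G_j)=jm$, and $G_n=\FF_q^{m\times n}$ because the full Moore matrix is invertible precisely when $a_1,\ldots,a_n$ are $\FF_q$-linearly independent. In particular the whole space is itself a Gabidulin code; this is what makes the formula hold for every Gabidulin code irrespective of its dimension, and it renders the case $C=\FF_q^{m\times n}$ a special instance, exactly as $\FF_q^n$ is a Reed-Solomon code in Corollary~\ref{alpha RS code}.

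Next I would split the indices according to the two ranges. When $i$ is a multiple of $m$, say $i=jm$, the Gabidulin code $G_j$ is an MRD code of dimension exactly $i$; it is a subcode of $C$ when $j\le k$ and a supercode of $C$ when $j\ge k$, so Proposition~\ref{alpha for MRD} applies verbatim. For an index $jm<i<(j+1)m$ lying strictly between two consecutive Gabidulin dimensions, I would instead produce a quasi-MRD code $D$ of $\FF_q$-dimension $i$ sandwiched as $G_j\subseteq D\subseteq G_{j+1}$: in the subcode range $i\le K$ this $D$ can be taken inside $C$, since $G_{j+1}\subseteq C$, while in the supercode range $i\ge K$ it can be taken containing $C$, since $G_j\supseteq C$. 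Feeding each such $D$ into Proposition~\ref{alpha for MRD} then yields the stated value of $\alpha_i(C)$, and the two ranges overlap cleanly at $i=K=km$, where $D=C$ itself serves.

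The main obstacle is precisely this intermediate step. Unlike Reed-Solomon codes, which are MDS in \emph{every} dimension and therefore let one quote nestedness directly, Gabidulin codes exist only in $\FF_q$-dimensions that are multiples of $m$, so for the indices $i$ with $m\nmid i$ one cannot simply invoke the chain $G_0\subset\cdots\subset G_n$. The technical heart of the proof is therefore the construction of a quasi-MRD code of each prescribed dimension $i$ nested correctly between $G_j$ and $G_{j+1}$, together with the verification that it is contained in (respectively contains) $C$; the reduction to Proposition~\ref{alpha for MRD} and the multiples-of-$m$ case are then routine.
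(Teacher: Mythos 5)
Your proposal is correct and takes essentially the same route as the paper, whose proof of Corollary~\ref{25.03.06} likewise deduces the formula from Proposition~\ref{alpha for MRD} together with the fact that a Gabidulin code fits into the maximal chain of truncated Gabidulin (hence MRD) codes $G_1\subset G_2\subset\cdots\subset G_n=\FF_q^{m\times n}$, one for each dimension. The step you single out as the technical heart is in fact immediate: for $jm<i<(j+1)m$, \emph{any} $\FF_q$-subspace $D$ with $G_j\subseteq D\subseteq G_{j+1}$ and $\dim D=i$ satisfies $d_{\min}(D)\geq d_{\min}(G_{j+1})$ by monotonicity of the minimum distance under passing to subcodes, while the Singleton bound forces equality, so $D$ is automatically quasi-MRD and the sandwiched codes of every intermediate dimension exist without any construction.
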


\begin{proof}
This follows from Proposition~\ref{alpha for MRD}, because a Gabidulin code is an MRD code which fits into a maximal a chain of MRD codes, one for each dimension.
\end{proof}

\begin{remark}
Similarly to Remark~\ref{25.03.05}, Corollary~\ref{25.03.06} states that the code sequences do not determine the dimension or the minimum distance of a rank-metric code.
\end{remark}

A similar result holds for the sum-rank metric. The statement follows directly from the Singleton Bound as stated in~\cite[Corollary VI.5]{CGLLMS22}.

\begin{proposition}\label{alphaMSDR}
Let $V=\mathbb{F}_q^{m_1\times n_1}\times\ldots\times\mathbb{F}_q^{m_\ell\times n_\ell}$ and let $d$ be the sum-rank metric. Let $C \subseteq V$ be a code of dimension $k$. For $1\leq i\leq\sum_{h=1}^\ell m_i n_i$, let 
$$j=\min\{h\mid 1\leq h\leq\ell,\; i>\sum_{h=j+1}^{\ell} m_h n_h+m_j\}.$$
and $$\delta=\min\{h\mid 1\leq h\leq n_j-1,\; i>\sum_{h=j}^{\ell} m_h n_h-m_j\delta\}.$$
Then $$\alpha_i(C)\leq 1+\sum_{h=1}^{j-1} n_h+\delta.$$ 
Moreover
\begin{itemize}
\item If $1\leq i\leq k$, then $\alpha_i(C)=1+\sum_{h=1}^{j-1} n_h+\delta$ if and only if $C$ contains an $i$-dimensional MSRD code.
\item If $k \leq i\leq\sum_{h=1}^\ell m_i n_i$, then $\alpha_i(C)=1+\sum_{h=1}^{j-1} n_h+\delta$ if and only if $C$ is contained in an $i$-dimensional MSRD code.
\end{itemize}
\end{proposition}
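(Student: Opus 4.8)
The plan is to mirror the proofs of Proposition~\ref{alpha for MDS} and Proposition~\ref{alpha for MRD}, replacing the Hamming and rank Singleton bounds by the sum-rank Singleton bound of \cite[Corollary VI.5]{CGLLMS22}. The heart of the matter is that the quantity $1+\sum_{h=1}^{j-1}n_h+\delta$ is exactly the largest value of the minimum distance that the Singleton bound allows for a code of dimension $i$, and that the two-step definition of the pair $(j,\delta)$ is nothing but the inversion of that bound.

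First I would set up the bookkeeping. For an integer $d\geq 1$, write $d-1=\sum_{h=1}^{p-1}n_h+e$ with $0\le e\le n_p-1$, and let $B(d)=\sum_{h=p}^\ell m_hn_h-m_pe$ denote the maximal dimension that the sum-rank Singleton bound permits for a code of minimum distance $d$. A direct computation shows that $B$ is strictly decreasing in $d$: increasing $e$ by one inside a block $p$ drops $B$ by $m_p$, and crossing from the end of block $p$ to the start of block $p+1$ also drops $B$ by $m_p$. Consequently, for each $i$ there is a well-defined largest distance $d^\ast=d^\ast(i)$ with $B(d^\ast)\geq i$, and I would verify that the integers $j$ and $\delta$ of the statement track the block index and offset of $d^\ast$, so that $d^\ast=1+\sum_{h=1}^{j-1}n_h+\delta$. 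Here $j$ is singled out as the first block whose end-of-block Singleton value $\sum_{h=j+1}^{\ell} m_hn_h+m_j$ drops below $i$, and $\delta$ then records how far one may advance inside block $j$ before the bound falls under $i$.

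With $d^\ast$ identified, the upper bound is immediate. For $1\leq i\leq k$, every $i$-dimensional subcode $D\subseteq C$ satisfies $B(d_{\min}(D))\geq\dim D=i$ by the Singleton bound, hence $d_{\min}(D)\leq d^\ast$ by monotonicity of $B$; taking the maximum over all such $D$ gives $\alpha_i(C)\leq d^\ast=1+\sum_{h=1}^{j-1}n_h+\delta$. For $k\leq i\leq\sum_{h=1}^\ell m_hn_h$ the same argument applies to the $i$-dimensional supercodes $D\supseteq C$, which exist since $i\leq\dim V$.

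Finally, the equality characterizations follow from unwinding the definition of MSRD codes. Equality $\alpha_i(C)=d^\ast$ holds exactly when some $i$-dimensional $D$ (a subcode when $i\leq k$, a supercode when $i\geq k$) attains $d_{\min}(D)=d^\ast$, which by the choice of $d^\ast$ forces $D$ to meet the sum-rank Singleton bound with equality, i.e., $D$ is MSRD. This yields the two stated equivalences. I expect the only delicate point to be the combinatorial verification in the second step that the nested minima defining $j$ and $\delta$ correctly recover the block index and offset of $d^\ast$ across block boundaries; once the strict monotonicity of $B$ is in hand, this reduces to a careful but routine case analysis at the transitions between consecutive blocks.
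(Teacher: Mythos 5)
Your overall route is the same as the paper's: the paper's entire ``proof'' is a one-line appeal to the Singleton bound of \cite[Corollary VI.5]{CGLLMS22}, and your fleshed-out version of that appeal is the intended argument. Your bookkeeping is also sound: the function $B(d)$ is strictly decreasing (each unit step in $d$ drops $B$ by $m_p$, both inside block $p$ and across the boundary from block $p$ to block $p+1$), so the bound $\alpha_i(C)\leq d^\ast$ for subcodes and supercodes follows exactly as you say.

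The gap is in the equality characterization. You assert that an $i$-dimensional code $D$ with $d_{\min}(D)=d^\ast$ is ``forced to meet the sum-rank Singleton bound with equality.'' This does not follow. The definition of $d^\ast$ gives $B(d^\ast+1)<i\leq B(d^\ast)$, and because $B$ jumps in steps of size $m_p$ rather than $1$, the inequality $i\leq B(d^\ast)$ can be strict. When it is strict, a code $D$ of dimension $i$ and distance $d^\ast$ satisfies $\dim D<B(d_{\min}(D))$, so it does \emph{not} achieve the bound as MSRD is defined in the preliminaries, yet $\alpha_i(C)=d^\ast$ still holds; hence your forward implication fails. The case $\ell=1$ already exhibits this: there $B(d)=m(n-d+1)$ and $d^\ast(i)=n-\lceil i/m\rceil+1$, and when $m\nmid i$ any $i$-dimensional code of distance $d^\ast(i)$ (for instance, an $i$-dimensional subcode of an MRD code of that distance) has $i<B(d^\ast(i))$ and is not MRD. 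This is precisely why Proposition~\ref{alpha for MRD} must speak of \emph{quasi}-MRD codes when $m\nmid i$, and it shows your claimed equivalence cannot be proved as stated for those $i$ not in the image of $B$. To close the gap you must either restrict the equivalence to indices $i$ with $i=B(d^\ast(i))$, or introduce the sum-rank analogue of quasi-MRD codes (codes whose minimum distance is maximal for their dimension) and prove the equivalence for that notion---a weakness arguably inherited from the statement of the proposition itself, which, unlike its rank-metric counterpart, mentions only MSRD codes. The backward direction (an $i$-dimensional MSRD sub- or supercode forces $\alpha_i(C)=d^\ast$) is correct by strict monotonicity of $B$, as your argument implicitly uses.
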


The following is an example of two MRD codes with the same dimension and minimum distance, but different code distances. It follows from~\cite[Corollary~6.6]{GJLR20} that the $q$-polymatroids associated to $C_1$ and $C_2$ coincide. In fact, they both are the uniform $q$-matroid, see also~\cite[Example ~4.16]{JP16}.

\begin{example}\label{ex:unequivMRD}
Let $C_1\subseteq\FF_2^{4\times 4}$ be the code from~\cite[Example~7.2]{BR17}, i.e.,
$$C_1=\left\langle\begin{pmatrix}
   1 & 0 & 0 & 0 \\ 0 & 0 & 0 & 1 \\ 0 & 0 & 1 & 0 \\ 0 & 1 & 0 & 0
  \end{pmatrix}, \ \ \ \ \ 
\begin{pmatrix}
   0 & 1 & 0 & 0 \\ 0 & 0 & 1 & 1 \\ 0 & 0 & 0 & 1 \\ 1 & 1 & 0 & 0
  \end{pmatrix},  \ \ \ \ \ 
\begin{pmatrix}
   0 & 0 & 1 & 0 \\ 0 & 1 & 1 & 1 \\ 1 & 0 & 1 & 0 \\ 1 & 0 & 0 & 1
  \end{pmatrix},\ \ \ \ \ 
\begin{pmatrix}
   0 & 0 & 0 & 1 \\ 1 & 1 & 1 & 0 \\ 0 & 1 & 0 & 1 \\ 0 & 1 & 1 & 1
  \end{pmatrix}\right\rangle.
$$  
The code $C_1$ is an MRD code with minimum distance $\alpha_4(C_1)=d_{\min}(C_1)=4$. 
Let $C_2\subseteq\FF_2^{4\times 4}$ be a Gabidulin code of the same dimension and minimum distance as $C_1$.

In~\cite[Example 7.2]{BR17}, it is shown that
$\alpha_5(C_1)=2$. On the other side, since $C_2$ is a Gabidulin code, it is contained in a Gabidulin code of minimum distance $3$ and dimension $8$. This implies that $\alpha_i(C_2)=3$ for $5\leq i\leq 8$.
\end{example}

Thanks to Example~\ref{ex:unequivMRD}, the analog of Theorem~\ref{thm:notinvariant} holds for rank-metric codes and $q$-matroids or $q$-polymatroids.

\begin{theorem}\label{thm:polymatroid}
Let $V=\mathbb{F}_q^{m\times n}$ and $d$ be the rank metric. In this setting, the sequence of code distances is not an invariant of the $q$-matroid or the $q$-polymatroid associated with a code.
\end{theorem}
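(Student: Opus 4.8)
The plan is to reproduce, mutatis mutandis, the argument used for Theorem~\ref{thm:notinvariant}, replacing the pair of Hamming-metric MDS codes from Example~\ref{ex:RSvsERS} with the pair of rank-metric MRD codes $C_1,C_2\subseteq\FF_2^{4\times 4}$ from Example~\ref{ex:unequivMRD}. The logical shape of the proof is: exhibit two codes with \emph{identical} associated $q$-matroid (hence identical $q$-polymatroid, since a $q$-matroid is a special case of a $q$-polymatroid), yet with \emph{distinct} sequences of code distances. The existence of such a pair immediately shows that the map sending a code to its sequence of code distances cannot factor through the $q$-matroid or the $q$-polymatroid, which is exactly the assertion of the theorem.

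The first step is to justify that $C_1$ and $C_2$ share the same $q$-(poly)matroid. Here I would simply invoke the facts already recorded in the text preceding Example~\ref{ex:unequivMRD}: by \cite[Corollary~6.6]{GJLR20} the $q$-polymatroid of an MRD code is the uniform $q$-matroid, which is determined by the code parameters (length, dimension, and minimum distance), see also \cite[Example~4.16]{JP16}. Since $C_1$ and $C_2$ are both MRD codes in $\FF_2^{4\times 4}$ of the same dimension and minimum distance $4$, they are associated with the same uniform $q$-matroid, and therefore with the same $q$-polymatroid. This parallels the Hamming case, where the matroid of an MDS code is the uniform matroid determined by the parameters.

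The second step is to observe that the two codes are nonetheless separated by their code distances. As computed in Example~\ref{ex:unequivMRD}, one has $\alpha_5(C_1)=2$, whereas $\alpha_i(C_2)=3$ for $5\leq i\leq 8$; in particular $\alpha_5(C_1)=2\neq 3=\alpha_5(C_2)$. This single discrepancy suffices. Thus two codes with the same $q$-matroid (and $q$-polymatroid) have different sequences of code distances, completing the proof.

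The genuinely nontrivial content does not lie in this short deduction but in the supporting computation inside Example~\ref{ex:unequivMRD}, namely the inequality $\alpha_5(C_1)=2$. By Proposition~\ref{alpha for MRD}, this is equivalent to the statement that $C_1$ is \emph{not} contained in any $5$-dimensional quasi-MRD supercode (which would force $\alpha_5=3$), in sharp contrast to the Gabidulin code $C_2$, whose membership in a maximal chain of (quasi-)MRD codes gives $\alpha_5(C_2)=3$. Establishing the non-extendability of $C_1$ is the main obstacle; here I would rely on the explicit analysis of \cite[Example~7.2]{BR17}, from which $\alpha_5(C_1)=2$ is taken, rather than re-derive it. Once this input is granted, the theorem follows at once.
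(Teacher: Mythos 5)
Your proposal is correct and follows essentially the same route as the paper: the paper proves Theorem~\ref{thm:polymatroid} precisely by pointing to Example~\ref{ex:unequivMRD}, where the two MRD codes $C_1,C_2\subseteq\FF_2^{4\times 4}$ share the uniform $q$-(poly)matroid by \cite[Corollary~6.6]{GJLR20}, yet $\alpha_5(C_1)=2\neq 3=\alpha_5(C_2)$, with the nontrivial input $\alpha_5(C_1)=2$ imported from \cite[Example~7.2]{BR17} exactly as you do. Your additional remarks (the reduction of $\alpha_5(C_1)=2$ to non-extendability to a quasi-MRD supercode, and the nesting of Gabidulin codes giving $\alpha_5(C_2)=3$) are consistent elaborations of the same argument.
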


As in the case of linear block codes, one can find rank-metric codes whose sequence of code distances contains a constant subsequence of arbitrary length.

\begin{example}\label{25.03.03}
A $[mk,k,m]$ rank-metric Hadamard code $\mathcal{H}_{m,k,q}$ is a linear code with a generator matrix such that the columns of it form a basis of $\mathbb{F}_{q^m}^k$ over $\mathbb{F}_q$. They are constant-weight codes with respect to the rank-metric~\cite[Corollary 2]{Ran20}. In this case, $\alpha_i(\mathcal{H}_{m,k,q})=m$ for any $1\leq i\leq k$. 
\end{example}

The next example shows that the sequence of code distances of a code do not determine the sequence of code distances of its dual code.

\begin{example}\rm
Let $\FF_4=F_2[\eta]$, where $\eta^2=\eta+1$, and consider the subcodes of $\mathbb{F}_4^4$
$$C_1=\langle (1,\eta,0,0),\ (0,1,\eta^2,0)\rangle,$$
$$C_2=\langle (1,\eta,0,0),\ (0,0,1,\eta)\rangle.$$
We have $\alpha_i(C_j)=2$ for any $1\leq i\leq 4$, $1\leq j\leq 2$. The dual codes are
    $$C_1^\perp=\langle (1,\eta^2,1,0), (0,0,0,1)\rangle,$$
    $$C_2^\perp=\langle (1,\eta^2,0,0),(0,0,1,\eta^2)\rangle$$
and one can check that $\alpha_i(C_1^\perp)=1\neq 2=\alpha_i(C_2^\perp)$, $1\leq i\leq 4$.
\end{example}

\section{Greedy code distances and partial distances of polar codes}\label{greedy}

One can define a greedy variant of the code distances as follows.

\begin{definition}
Let $C\subseteq V$ be a $k$-dimensional code. For $1\leq i\leq k$, we define an $\alpha_1$-{\bf greedy subcode} as a subcode generated by a codeword of $C$ of maximum weight and an $\alpha_i$-{\bf greedy subcode} as an $i$-dimensional subcode of $C$ of maximum minimum distance among those which contain an $\alpha_{i-1}$-greedy subcode. 
For $k+1\leq i\leq N$, we define an $\alpha_{k+1}$-{\bf greedy supercode} as a $(k+1)$-dimensional code of maximum minimum distance among those which contain $C$ and an $\alpha_i$-{\bf greedy supercode} as $i$-dimensional supercode of $C$ of maximum minimum distance among those which contain an $\alpha_{i-1}$-greedy supercode. We refer in general to $\alpha_i$-{\bf greedy code} for $1\leq i\leq N$.

For $1\leq i\leq k$, define the $i$-th {\bf greedy subcode distance} of $C$ as
$$\alpha^g_i(C)=\max\{d_{\min}(D)\mid D\subseteq C, D\mbox{ is an $\alpha_i$-greedy subcode of } C\}.$$
For $k+1\leq i\leq N$ define the $i$-th {\bf greedy supercode distance} of $C$ as
$$\alpha^g_i(C)=\max\{d_{\min}(D)\mid D\supseteq C, D\mbox{ is an $\alpha_i$-greedy supercode of } C\}.$$
We refer to greedy subcode distances and greedy supercode distances as {\bf greedy code distances}.
\end{definition}

Notice that, by definition, for every $C$ there exists a chain of $\alpha_i$-greedy codes $$D_1\subseteq D_2\subseteq\ldots\subseteq D_{k-1}\subseteq D_k=C\subseteq D_{k+1}\subseteq\ldots\subseteq D_N=V$$ such that $$\alpha_i^g(C)=d_{\min}(D_i)$$ for $1\leq i\leq N$.

It is clear from the definition that 
$$\alpha^g_i(C)\leq\alpha_i(C) \;\;
\mbox{ for } 1\leq i\leq N$$  and 
$$\alpha^g_i(C)=\alpha_i(C) \;\;\; 
\mbox{for $i=1,k,k+1,N$}.$$ 

Similarly to greedy weights and generalized weights, code distances do not agree with their greedy counterparts, in general.

\begin{example}
For $C=V=\FF_2^3$ we have $\alpha_1(C)=3$, $\alpha_2(C)=2$, and $\alpha_1(C)=1$. Moreover, $\alpha^g_1(C)=3$ and $\alpha^g_2(C)=\alpha^g_1(C)=1$.
\end{example}

The difference between code distances and their greedy counterparts boils down to the fact that subcodes and supercodes of $C$ of largest minimum distance do not need to be nested, in general. The next example illustrates this phenomenon.
\begin{example}
If $q$ is odd and $n\leq q$, then $$\alpha_i(\mathbb{F}_q^n)=\alpha_i^g(\mathbb{F}_q^n)=n-i+1.$$ However, for $n=q+1$, $$\alpha_i(\mathbb{F}_q^{q+1})=q-i+2>\alpha_i^g(\mathbb{F}_q^{q+1})=q-i+1\;\; \mbox{ for } 1<i<q+1,$$ while $$\alpha_1(\mathbb{F}_q^{q+1})=\alpha_1^g(\mathbb{F}_q^{q+1})=q+1\; \mbox{ and }\;\alpha_{q+1}(\mathbb{F}_q^{q+1})=\alpha_{q+1}^g(\mathbb{F}_q^{q+1})=1.$$
This depends on the fact that, for $n\leq q$, $\mathbb{F}_q^n$ contains a maximal chain of nested MDS codes, namely the Reed-Solomon codes. For $n=q+1$ and $1<i<(\sqrt{q}+4)/9$, instead, the only $i$-dimensional MDS codes are generalized extended Reed-Solomon codes~\cite{SR03}, which do not contain a codeword of weight $q+1$. This proves that $\alpha_i^g(\mathbb{F}_q^{q+1})\leq q-i+1$ for $1<i<q+1$. The result follows by taking the nested codes $C_i=\langle (\alpha_1^j,\ldots,\alpha_q^j,\delta_{0j})\ :\ 0\leq j\leq i-1\rangle$, $1\leq i\leq q$, where $\delta_{0j}$ is the Kronecker $\delta$, and $\mathbb{F}_q=\{\alpha_1,\ldots,\alpha_q\}$.
\end{example}

\begin{remark}
It follows directly from the definition that if $D_i$ is an $\alpha_i$-greedy subcode of $C$, then $\alpha_j^g(D_i)=\alpha_j^g(C)$ for $1\leq j\leq i$. 
\end{remark}

One can give a simple lower bound on greedy code distances, hence also on code distances. 

\begin{proposition}
Let $C\subseteq V$ be a code and let $c_1,\ldots,c_k$ be a basis of $C$. Suppose without loss of generality that $\wt(c_1)\geq\wt(c_2)\geq\ldots\geq\wt(c_k)$. Then $$\alpha_i^g(C)\geq d_{\min}(\langle c_1,\ldots,c_k\rangle).$$
\end{proposition}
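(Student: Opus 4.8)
The plan is first to read the right-hand side correctly. Since $c_1,\ldots,c_k$ is a basis of $C$, we have $\langle c_1,\ldots,c_k\rangle=C$, so the asserted inequality is simply $\alpha_i^g(C)\geq d_{\min}(C)$. This is meaningful precisely in the subcode range $1\leq i\leq k$: in the supercode range $i>k$ one instead has $\alpha_i^g(C)\leq d_{\min}(C)$, since enlarging a code cannot increase its minimum distance (for instance $C=\langle(1,1,1,1)\rangle\subseteq\FF_2^4$ has $\alpha_2^g(C)=2<4=d_{\min}(C)$). I would therefore state and prove the bound for $1\leq i\leq k$, which is also the only range in which a basis of $C$ alone supplies subcodes.

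The main tool is the observation recorded immediately before the statement: for every $C$ there exists a single chain of greedy codes $D_1\subseteq D_2\subseteq\cdots\subseteq D_k=C\subseteq\cdots\subseteq D_N=V$ with $\alpha_i^g(C)=d_{\min}(D_i)$ for all $i$. Fixing such a chain is the key step, as it replaces the greedy distances, which are a priori defined through a nested sequence of maximizations, by the minimum distances of concrete nested codes, for which monotonicity is transparent. With this in hand, for $1\leq i\leq k$ I would invoke the inclusion $D_i\subseteq D_k=C$: every nonzero codeword of $D_i$ is a nonzero codeword of $C$, so the minimum weight over $D_i$ cannot be smaller than the minimum weight over $C$, i.e. $d_{\min}(D_i)\geq d_{\min}(C)$. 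Hence $\alpha_i^g(C)=d_{\min}(D_i)\geq d_{\min}(C)=d_{\min}(\langle c_1,\ldots,c_k\rangle)$, which is the claim. Since $\alpha_i^g(C)\leq\alpha_i(C)$, the same lower bound passes to the code distances, justifying the remark that it bounds the $\alpha_i(C)$ as well.

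The step I expect to require the most care is not a computation but a matter of bookkeeping: justifying that the greedy subcode distances are genuinely non-increasing, i.e. that they may all be read off one nested chain rather than from separate maximizing chains chosen independently at each index. This is exactly what the preceding observation supplies, so the work reduces to citing it correctly and then using monotonicity of the minimum distance under inclusion. I would also point out that the weight ordering $\wt(c_1)\geq\cdots\geq\wt(c_k)$ plays no role in the argument: it merely fixes a convenient basis, whereas the only property used is $\langle c_1,\ldots,c_k\rangle=C$, which holds for any basis. A useful sanity check is $C=\FF_2^3$, where $\alpha_2^g(C)=1=d_{\min}(C)$ even though the even-weight subcode $\langle(1,1,0),(0,1,1)\rangle$ has minimum distance $2$; this confirms that the bound must be phrased with $d_{\min}(C)$ rather than with the minimum distance of the subcode spanned by the $i$ heaviest basis vectors, for which the inequality fails.
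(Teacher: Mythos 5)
Your proof is correct, and there is in fact no argument in the paper to compare it against: the proposition is stated there without proof, offered as a ``simple'' lower bound, so the monotonicity argument you give is surely the intended one. Two comments. First, you can simplify even further: for $1\leq i\leq k$, every $\alpha_i$-greedy subcode is by definition a subcode of $C$, and $\alpha_i^g(C)$ is the minimum distance of such a subcode, so $\alpha_i^g(C)\geq d_{\min}(C)$ follows from the single containment $D_i\subseteq C$; the fixed nested chain (recorded in the paper just before the statement) is not strictly needed, although citing it is harmless. Second, your exegesis of the statement is the only tenable one, and your diagnostic remarks are sound: since $\langle c_1,\ldots,c_k\rangle=C$, the right-hand side collapses to $d_{\min}(C)$, the weight-ordering hypothesis becomes vacuous, and the claim must be restricted to the subcode range $1\leq i\leq k$ --- your example $C=\langle(1,1,1,1)\rangle\subseteq\FF_2^4$ with $\alpha_2^g(C)=2<4=d_{\min}(C)$ correctly shows that the inequality reverses for $i>k$, as it must, since $\alpha_i^g(C)\leq\alpha_i(C)\leq\alpha_k(C)$ for $i\geq k$ by Proposition~\ref{alphainequality}. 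Likewise, the reading that the ordering hypothesis appears to be set up for, namely $\alpha_i^g(C)\geq d_{\min}(\langle c_1,\ldots,c_i\rangle)$, is indeed false: in your $\FF_2^3$ counterexample the weight-ordered basis $(1,1,0),(0,1,1),(1,0,0)$ has prefix span of minimum distance $2$, whereas the paper itself computes the greedy code distance sequence of $\FF_2^3$ to be $3,1,1$. So the proposition holds exactly in your formulation, and your proof of that formulation is complete.
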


We now compare greedy code weights and partial distances. We start by recalling the definition of partial distances.

\begin{definition}
Let $A\in\mathbb{F}_q^{k\times n}$ be a full-rank matrix with rows $a_1,\ldots,a_k\in\FF_q^n$. The {\bf $i$-th partial distance} $\delta_i$ is defined as
$$\delta_i=\min\{wt(v)\ :\ v\in a_i+\langle a_1,\ldots, a_{i-1}\rangle\}.$$
\end{definition}

Partial distances were defined for non-singular matrices in the setting of polar codes~\cite{MT10, KSR10}. Any polar code is a code generated by some rows of a non-singular matrix $A$ (called the kernel) that polarizes a given channel $W$. 

For a matrix $A$ as above, the partial distances measure the probability of block error of a polar code constructed from $A$~\cite{AM14}.

\begin{definition}\rm
Let $A\in GL_q(n)$ with partial distances $\delta_1,\ldots, \delta_n$. The {\bf exponent} of $A$ is 
$$E(A)=\frac{1}{n}\sum_{i=1}^n \log_n(\delta_i).$$
\end{definition}

The next result follows from the Singleton Bound.

\begin{lemma}
The maximum exponent $E_{n,q}$ for an invertible matrix of size $n$ over a field $\FF_q$ is
$$E_{n,q}\leq \frac{1}{n}\sum_{i=1}^n \log_n(n-i+1).$$
\end{lemma}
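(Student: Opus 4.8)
The plan is to reduce the inequality to the single pointwise estimate
$$\delta_i\le n-i+1\qquad(1\le i\le n),$$
after which the lemma is immediate. Since $\log_n$ is increasing for $n\ge 2$, this bound gives $\log_n(\delta_i)\le\log_n(n-i+1)$ for every $i$; summing and dividing by $n$ yields $E(A)\le\frac1n\sum_{i=1}^n\log_n(n-i+1)$ for each $A\in GL_q(n)$, and taking the maximum over $A$ produces the claim. (The logarithms are all defined since each $\delta_i\ge 1$, the weight of a nonzero vector.) So the entire content lies in bounding one partial distance.

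To prove $\delta_i\le n-i+1$, I would set $C_j=\langle a_1,\ldots,a_j\rangle$, so that $C_{i-1}\subsetneq C_i$ with $\dim C_j=j$; here is exactly where full rank of $A$ enters, guaranteeing $a_i\notin C_{i-1}$. First I would note that $\delta_i$ is insensitive to scaling the coset: for $\lambda\in\FF_q^\ast$ the map $v\mapsto\lambda v$ preserves $\wt$ and permutes the nonzero cosets of $C_{i-1}$ in $C_i$, whose union is $C_i\setminus C_{i-1}$. Hence
$$\delta_i=\min\{\wt(v)\ :\ v\in C_i\setminus C_{i-1}\}.$$
Thus it suffices to exhibit a codeword of $C_i$ of weight at most $n-i+1$ lying outside $C_{i-1}$.

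For this I would use an information set, which is the Singleton-bound mechanism and explains the attribution. Since $\dim C_i=i$, there is a set $I$ of $i$ coordinates on which $C_i$ projects isomorphically onto $\FF_q^i$. The associated systematic generator matrix has rows $g_1,\ldots,g_i$, each with a single nonzero entry inside $I$ and arbitrary entries on the remaining $n-i$ coordinates, so $\wt(g_j)\le n-i+1$ for all $j$. These rows form a basis of $C_i$; as $\dim C_{i-1}=i-1<i$, they cannot all lie in $C_{i-1}$, so some $g_j\in C_i\setminus C_{i-1}$. The displayed equality then gives $\delta_i\le\wt(g_j)\le n-i+1$.

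The one point requiring care — and the only genuine obstacle — is the passage from a low-weight codeword of $C_i$ to one lying \emph{outside} $C_{i-1}$. Applying the Singleton bound to $C_i$ directly only yields a word of weight $\le n-i+1$ somewhere in $C_i$, which could fall inside $C_{i-1}$ and hence contribute nothing to $\delta_i$. The dimension count on the systematic basis $g_1,\ldots,g_i$ is precisely what forces at least one basis vector to escape $C_{i-1}$, and it is indispensable for concluding the pointwise bound.
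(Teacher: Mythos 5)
Your proof is correct and takes essentially the same approach as the paper: the paper offers no details beyond the remark that the lemma ``follows from the Singleton Bound,'' which is precisely your reduction to the pointwise estimate $\delta_i\le n-i+1$ followed by summing logarithms. Your information-set argument, which forces a basis vector of $C_i$ of weight at most $n-i+1$ to lie outside $C_{i-1}$, correctly supplies the one detail the paper leaves implicit.
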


The upper bound can be met for given $n$ and $q$ if and only if there is a flag of MDS codes $0\subsetneq D_1\subsetneq\ldots\subsetneq D_{n-1}\subsetneq\mathbb{F}_q^n$. In particular, the maximum $E_{n,q}$ can be attained for $1\leq n\leq q$, but not for $n=q+1$. For fixed values of $n$ which are larger than $q$, it has been a challenge to find matrices that attain the maximum exponent $E_{n,q}$, see~\cite{LLK15}.

The concept of greedy subcode distances is related to that of partial distances of a polar code as follows. 

\begin{theorem}\label{thm:polar}
Let $C\subseteq\FF_q^n$ be a linear block code. There exists a generator matrix $G$ of $C$ such that the corresponding partial distances coincide with the greedy subcode distances of $C$. In addition, $\alpha_i^g(C)$ equals the weight of the $i$-th row of $G$.
\end{theorem}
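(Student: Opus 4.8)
The plan is to read the generator matrix off a greedy chain. As noted right after the definition of greedy code distances, there is a chain of greedy subcodes $D_1\subseteq\cdots\subseteq D_k=C$ with $\alpha^g_i(C)=d_{\min}(D_i)=:w_i$, so $w_1\geq\cdots\geq w_k$. Set $\mu_i=\min\{\wt(v):v\in D_i\setminus D_{i-1}\}$ and choose each row $a_i\in D_i\setminus D_{i-1}$ to be a codeword of weight $\mu_i$. Then $\langle a_1,\dots,a_i\rangle=D_i$, so the matrix $G$ with rows $a_1,\dots,a_k$ generates $C$. Moreover the coset $a_i+\langle a_1,\dots,a_{i-1}\rangle=a_i+D_{i-1}$ is contained in $D_i\setminus D_{i-1}$, and $a_i$ attains the minimum weight on that set, so the $i$-th partial distance equals $\delta_i=\wt(a_i)=\mu_i$. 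Hence both claims of the theorem follow at once from the single identity $\mu_i=w_i$ for all $i$.

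Since $a_i+D_{i-1}\subseteq D_i\setminus\{0\}$ we always have $\mu_i\geq w_i$, and equality is automatic at any step where the minimum distance drops: if $w_i<w_{i-1}$, then any weight-$w_i$ codeword of $D_i$ has weight strictly below $w_{i-1}\leq\wt(d)$ for every nonzero $d\in D_{i-1}$, hence lies outside $D_{i-1}$ and witnesses $\mu_i=w_i$. The difficulty is entirely in the plateau steps $w_i=w_{i-1}=:w$, where I must rule out that all minimum-weight codewords of $D_i$ are absorbed into $D_{i-1}$. This is the main obstacle, and it truly uses greediness: for a non-greedy $(i-1)$-dimensional subcode the conclusion can fail, as with $\langle e_1{+}e_2,e_3{+}e_4,e_5{+}e_6\rangle$ inside $\langle e_1{+}e_2,e_3{+}e_4,e_5{+}e_6,e_1{+}e_3{+}e_5\rangle\subseteq\FF_2^6$.

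To settle it I would prove that every greedy chain satisfies $\mu_i=w_i$ at all steps, by descent. Call a step $j$ bad if $\mu_j>w_j$; a bad step forces $w_{j-1}=w_j=:w$. Assuming a bad step occurs somewhere, let $j$ be the least value that arises as the minimal bad step of some greedy chain; then $j\geq 2$, since $D_1$ is one-dimensional and weight is scaling-invariant, giving $\mu_1=w_1$. If $j=2$, then $d_{\min}(D_2)=w_1=\maxwt(C)$ forces every nonzero codeword of $D_2$ to have weight $\maxwt(C)$, so $D_2$ has constant weight and $\mu_2=w_2$, a contradiction. If $j\geq 3$, pick $b\in D_j\setminus D_{j-1}$ and put $\tilde D=\langle D_{j-2},b\rangle$, a $(j-1)$-dimensional code with $D_{j-2}\subseteq\tilde D\subseteq D_j$. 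Every vector of $\tilde D\setminus D_{j-2}$ lies in $D_j\setminus D_{j-1}$, hence has weight $\geq\mu_j>w$; combined with the greedy upper bound $d_{\min}(\tilde D)\leq\alpha^g_{j-1}=w$ this forces $w_{j-2}=w$ and $d_{\min}(\tilde D)=w$, so $\tilde D$ is itself an $\alpha_{j-1}$-greedy subcode extending $D_{j-2}$.

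Replacing $D_{j-1}$ by $\tilde D$ produces a new greedy chain $D_1\subseteq\cdots\subseteq D_{j-2}\subseteq\tilde D\subseteq D_j\subseteq\cdots\subseteq C$. Its step $j$ is now good, because the weight-$w$ codeword $a_{j-1}$ lies in $D_j\setminus\tilde D$ (otherwise $D_{j-1}=\langle D_{j-2},a_{j-1}\rangle\subseteq\tilde D$ would give $\tilde D=D_{j-1}$, excluding $b$); but step $j-1$ has become bad, since all weights on $\tilde D\setminus D_{j-2}$ exceed $w$. As steps $1,\dots,j-2$ are untouched and good, the new chain has minimal bad step $j-1<j$, contradicting the minimality of $j$. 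Therefore no greedy chain has a bad step, $\mu_i=w_i$ throughout, and the matrix $G$ built above satisfies $\delta_i=\wt(a_i)=\alpha^g_i(C)$. I expect this plateau analysis, packaged as the exchange/descent step, to be the only real difficulty; everything else is coset bookkeeping together with the scaling-invariance of the weight.
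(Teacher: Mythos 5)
Your proof is correct, and it establishes a statement that is in fact stronger than what the paper proves: namely that \emph{every} greedy chain satisfies $\mu_i=w_i$, so that minimum-weight coset representatives of an arbitrary greedy chain already form a valid generator matrix. That said, the pivotal construction is the same as the paper's. When a plateau $w_{j-1}=w_j=w$ obstructs the choice of a weight-$w$ representative, you replace the middle code by $\tilde D=\langle D_{j-2},b\rangle$ with $b\in D_j\setminus D_{j-1}$ and use the greedy bound $d_{\min}(\tilde D)\leq\alpha^g_{j-1}(C)$ to conclude that $\tilde D$ is again $\alpha_{j-1}$-greedy; the paper performs exactly this exchange, phrased as interchanging the roles of $g_{i+1}$ and $g_i$ in a chain with chosen generators, so that $D_i'=\langle D_{i-1},g_{i+1}\rangle$. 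The degenerate case is also shared: your $j=2$ base case (where $w_1=w_2$ forces $D_2$ to be constant weight, since $w_1=\maxwt(C)$) plays the role of the paper's remark that the claim is trivial when $D_i$ has constant weight. Where you genuinely differ is the logical packaging: the paper runs a constructive induction along a single fixed chain, repairing generators and pushing the exchange downward until it terminates at a strict drop in minimum distance, whereas you run a minimal-bad-step descent over all greedy chains simultaneously, which replaces the termination argument for the repair process by the minimality of the counterexample and yields the universal conclusion. What your route buys is that sharper structural fact (any greedy chain realizes the greedy distances as coset minima, a statement one could reuse later), together with a clean isolation, via your $\FF_2^6$ example, of why greediness is indispensable; what the paper's route buys is brevity, since it never needs to compare distinct chains, only to modify one.
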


\begin{proof}
Let $$D_1=\langle g_1\rangle\subseteq D_2=\langle g_{1},g_2\rangle\subseteq\ldots\subseteq D_k=C=\langle g_1,\ldots,g_k\rangle$$ be a chain of $\alpha_i$-greedy subcodes of $C$, i.e., $d_{\min}(D_i)=\alpha_i^g(C)$ for $1\leq i\leq k$.

We claim that $g_1,\ldots,g_k$ can be chosen so that $d_{\min}(D_i)=wt(g_i)$. We proceed by induction on $i$. For $i=1$, any generator $g_1$ of $D_1$ has $\wt(g_1)=d_{\min}(D_1)$. Suppose then that the claim holds true up to a given $i$. If $D_i$ is a constant-weight code, the result is trivial. Hence we suppose that $D_i$ is not a constant-weight code. If $d_{\min}(D_i)>d_{\min}(D_{i+1})$, then a codeword of minimum weight in $D_{i+1}$ is a linear combination of $\{g_1,\ldots,g_{i+1}\}$, with the coefficient of $g_{i+1}$ being nonzero. Hence we can replace $g_{i+1}$ with such minimum weight codeword. If instead $d_{\min}(D_i)=d_{\min}(D_{i+1})$, 
interchange the roles of $g_i$ and $g_{i+1}$ to obtain a chain 
$$D_1\subseteq\ldots\subseteq D_{i-1}\subseteq D_i^\prime=\langle g_1,\ldots,g_{i-1},g_{i+1}\rangle\subseteq D_{i+1}\subseteq\ldots\subseteq D_k=C.$$
Since $d_{\min}(D_{i})=d_{\min}(D_{i+1})\leq d_{\min}(D_i^\prime)\leq\alpha_i^g(C)$, then $d_{\min}(D_{i})=d_{\min}(D_{i}^\prime)$. If $d_{\min}(D_i)<d_{\min}(D_{i-1})$, then, as above, we can add to $g_{i+1}$ a suitable linear combination of $g_1,\ldots,g_{i-1}$ to obtain a vector of weight $d_{\min}(D_i)$. Else, we exchange the roles of $g_{i+1}$ and $g_{i-1}$ and proceed as before. The process terminates at the first $j<i$ such that $d_{\min}(D_j)<d_{\min}(D_{j-1})$. Such a $j$ exists, since, by assumption, $D_i$ is not a constant-weight code. This establishes the claim.

Let $G$ be the matrix with rows $\{g_1,\ldots,g_k\}$ such that $d_{\min}(D_i)=wt(g_i)$ and let. Since the $i$-th partial distance of $G$ is lower bounded by the minimum distance of $D_i$, but $g_{k-i+1}$ has exactly that weight, then these two numbers are equal and we have finished the proof.


\end{proof}

The next corollary shows that greedy code distances and code distances behave differently with respect to containment, see also Remark~\ref{rmk:subcode}.

\begin{corollary}
Let $C\subseteq\FF_q^n$ be a linear block code, fix $1\leq j\leq k$. Then $C$ has a subcode whose greedy subcode distances are the first $j$ greedy subcode distances of $C$.
\end{corollary}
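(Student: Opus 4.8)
The plan is to exhibit the required subcode as the $j$-th member of a greedy flag of $C$ and then to verify that its greedy subcode distances coincide with $\alpha_1^g(C),\ldots,\alpha_j^g(C)$. Concretely, I would use Theorem~\ref{thm:polar} to fix a generator matrix $G$ of $C$ with rows $g_1,\ldots,g_k$ such that $\alpha_i^g(C)=\wt(g_i)$ and such that the codes $D_i=\langle g_1,\ldots,g_i\rangle$ form a chain of $\alpha_i$-greedy subcodes of $C$, so that $d_{\min}(D_i)=\alpha_i^g(C)$ for $1\le i\le k$ (such a flag also exists directly from the definition of greedy code distances). Setting $D=D_j$, it then suffices to prove $\alpha_i^g(D)=\alpha_i^g(C)$ for $1\le i\le j$.

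I would establish this by an induction on $i$ that simultaneously maintains three assertions: (a) $\alpha_i^g(D)=\alpha_i^g(C)$; (b) $D_i$ is an $\alpha_i$-greedy subcode of $D$; and (c) every $\alpha_i$-greedy subcode of $D$ is also an $\alpha_i$-greedy subcode of $C$. For the base case $i=1$, I would first note that $\maxwt(D)=\maxwt(C)$: the bound $\maxwt(D)\le\maxwt(C)$ holds because $D\subseteq C$, while $D\supseteq D_1=\langle g_1\rangle$ contains a codeword of weight $\maxwt(C)$, giving the reverse inequality; all three assertions follow at once. In the inductive step I would use (c) at level $i-1$ to match the two maxima: any $i$-dimensional subcode of $D$ containing an $\alpha_{i-1}$-greedy subcode of $D$ is, by (c), an $i$-dimensional subcode of $C$ containing an $\alpha_{i-1}$-greedy subcode of $C$, hence has minimum distance at most $\alpha_i^g(C)=d_{\min}(D_i)$. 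Since $D_i$ itself contains the $\alpha_{i-1}$-greedy subcode $D_{i-1}$ of $D$ (by (b) at level $i-1$) and realizes this value, it is $\alpha_i$-greedy in $D$, which gives (b) and (a); and (c) at level $i$ follows because any $\alpha_i$-greedy subcode of $D$ has minimum distance $\alpha_i^g(D)=\alpha_i^g(C)$ and contains an $\alpha_{i-1}$-greedy subcode of $C$, hence realizes the defining maximum for $C$.

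The main obstacle, and the reason the statement is not entirely immediate, is that the greedy construction hides a maximum over a family of candidate subcodes at each step, and this family genuinely shrinks when one passes from $C$ to the smaller code $D$. The crucial point is therefore to show that no greedy subcode is created or destroyed in the passage, which is exactly the invariant (c) carried through the induction; this is also the content of the remark recorded after the definition of greedy code distances, namely that an $\alpha_i$-greedy subcode $D_i$ of $C$ satisfies $\alpha_\ell^g(D_i)=\alpha_\ell^g(C)$ for $\ell\le i$. Once (a)--(c) are in place, taking $D=D_j$ yields a subcode whose greedy subcode distances are the first $j$ greedy subcode distances of $C$, completing the proof.
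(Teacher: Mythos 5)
Your proof is correct, and it rests on the same basic strategy as the paper's: take a greedy flag $D_1\subseteq\cdots\subseteq D_k=C$ and show that $D=D_j$ is the required subcode. Where the two arguments part ways is in the verification that $\alpha_i^g(D_j)=\alpha_i^g(C)$ for $1\leq i\leq j$. The paper's proof is terse: it invokes Theorem~\ref{thm:polar} together with the observation that chains $D_1'\subseteq\cdots\subseteq D_{j-1}'\subseteq D_j$ of subcodes of $D_j$ correspond bijectively to chains of subcodes of $C$ passing through $D_j$; this correspondence is obvious set-theoretically, but it leaves implicit precisely the point you single out as the main obstacle, namely that the maximization defining the greedy distances ranges over a smaller family inside $D_j$ than inside $C$, so one must check that greedy chains are carried to greedy chains in both directions. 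Your induction with the three invariants (a)--(c), especially invariant (c) that every $\alpha_i$-greedy subcode of $D_j$ is an $\alpha_i$-greedy subcode of $C$ (and $D_i$ remains greedy in $D_j$), makes this precise, and it does so without appealing to Theorem~\ref{thm:polar} at all --- you only need the existence of a greedy flag, which is immediate from the definition. In effect, your argument is a rigorous, self-contained proof of the paper's earlier remark that $\alpha_\ell^g(D_i)=\alpha_\ell^g(C)$ for $\ell\leq i$ whenever $D_i$ is an $\alpha_i$-greedy subcode of $C$ (a remark the paper records as following ``directly from the definition''), from which the corollary follows at once. What the paper's route buys is brevity and a link to the polar-code machinery just established; what yours buys is independence from that theorem and an explicit treatment of the one step that could genuinely fail if handled carelessly.
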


\begin{proof}
Let $D_1\subseteq\ldots\subseteq D_i\subseteq D_k=C$ be a chain of $\alpha_i$-greedy subcodes of $C$ and consider $D_j\subseteq C$.
The result follows from the previous theorem, after noticing that there is a one-to one correspondence between chains $D_1^\prime\subseteq\ldots\subseteq D_{j-1}^\prime \subseteq D_j$  of subcodes of $D_j$ and chains of subcodes of $C$ of the form $D_1^\prime\subseteq\ldots\subseteq D_{j-1}^\prime \subseteq D_j\subseteq D_{j+1}\subseteq\ldots\subseteq D_k=C$.
\end{proof}

The greedy weights of $\mathbb{F}_q^n$ provide a constructive lower bound on the maximum exponent of an invertible matrix of size $n$. More precisely, we have the

\begin{proposition}
    Let $n$ be a positive integer and let $E=\max\{E(M)\ :\ M\in GL_q(n)\}$. Then 
    $$\frac{1}{n}\sum_{i=1}^n\log_n(\alpha_i^g(\mathbb{F}_q^n))\leq E$$
\end{proposition}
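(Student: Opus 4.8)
The plan is to exhibit the left-hand side as the exponent $E(G)$ of a single explicit invertible matrix $G$, so that the inequality becomes an immediate consequence of $E$ being a maximum. The only real input is Theorem~\ref{thm:polar}, applied to the full code $C=\FF_q^n$.

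First I would apply Theorem~\ref{thm:polar} to $C=\FF_q^n$. Since $\dim C=n$, this code is full-dimensional, so all of its greedy code distances are greedy \emph{subcode} distances $\alpha_i^g(\FF_q^n)$, defined for $1\leq i\leq n$, and there are no supercode distances to consider. The theorem then produces a generator matrix $G$ of $\FF_q^n$ whose partial distances $\delta_1,\ldots,\delta_n$ coincide with the greedy subcode distances of $C$, that is, $\delta_i=\alpha_i^g(\FF_q^n)$ for $1\leq i\leq n$.

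The second step is to note that $G\in GL_q(n)$. Indeed, $G$ is a generator matrix of the full space $\FF_q^n$, hence a $k\times n=n\times n$ matrix of rank $n$; being square and full-rank, it is invertible, so it is an admissible competitor in the maximization defining $E$. Computing directly from the definition of the exponent gives
$$E(G)=\frac{1}{n}\sum_{i=1}^n\log_n(\delta_i)=\frac{1}{n}\sum_{i=1}^n\log_n(\alpha_i^g(\FF_q^n)).$$
Since $E=\max\{E(M)\ :\ M\in GL_q(n)\}\geq E(G)$ by definition of $E$, the desired inequality follows at once.

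I do not expect any substantial obstacle: the statement is essentially a corollary of Theorem~\ref{thm:polar}. The one point deserving a line of justification is the verification that the generator matrix produced for the full space is genuinely invertible, so that it lies in $GL_q(n)$ and is eligible for the maximum; this is immediate because a full-rank generator matrix of $\FF_q^n$ is an $n\times n$ matrix of rank $n$. (If one wishes to be careful about the ordering of the partial distances relative to the greedy indexing, I would observe that the sum in $E(G)$ is symmetric in the $\delta_i$, so any reindexing coming from Theorem~\ref{thm:polar} leaves the value unchanged.)
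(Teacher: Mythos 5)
Your proposal is correct and follows exactly the paper's own argument: apply Theorem~\ref{thm:polar} to $C=\FF_q^n$ to obtain an invertible matrix whose partial distances are the greedy subcode distances, then bound its exponent by the maximum $E$. Your extra remarks (that a full-rank generator matrix of $\FF_q^n$ is genuinely in $GL_q(n)$, and that any reindexing of the $\delta_i$ is harmless since the sum is symmetric) only make explicit what the paper leaves implicit.
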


\begin{proof}
By the previous theorem, there exists an $A$ such that the partial distances of $A$ are $\alpha_i^g(\mathbb{F}_q^n)$, thus
    $$E(A)=\frac{1}{n}\sum_{i=1}^n\log_n(\alpha_i^g(\mathbb{F}_q^n))\leq E.$$
\end{proof}


\begin{example}
In $\mathbb{F}_2^3$, the matrix with the largest exponent has partial distances $2,2,1$. The sequence of greedy code distances of $\FF_2^3$ is $3,1,1$ and its sequence of code distances is $3,2,1$.
\end{example}

\section{Comparison with related code invariants}\label{comparison}

In this section, we compare code distances with some related code invariants. We start by recalling the definition of maximal code.

\begin{definition}
A code $C\subseteq V$ is {\bf maximal} if either $C=0$, or there exists no $D\supsetneq C$ such that $d_{\min}(D)=d_{\min}(C)$. 
\end{definition}

The definition of maximal code can be easily rephrased in terms of code distances. More in general, one can translate a strict inequality between consecutive greedy code distances into a statement on the maximality of certain subcodes or supercodes of~$C$.

\begin{proposition}
Let $0\neq C\subseteq V$ be a $k$-dimensional code, let $1\leq i<k$ and $j\geq k$. Then:
\begin{enumerate}
\item $C$ is maximal if and only if $\alpha_{k+1}(C)<\alpha_k(C)$,
\item if every $\alpha_i$-greedy subcode of $C$ is maximal, then $\alpha_{i+1}(C)<\alpha_i(C)$,
\item every $\alpha_j$-greedy supercode of $C$ is maximal if and only if $\alpha_{j+1}(C)<\alpha_j(C)$.
\end{enumerate}
\end{proposition}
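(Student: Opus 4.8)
The plan is to treat the three statements in increasing order of difficulty, using statement (1) as the engine for (2) and (3), and reducing everything to monotonicity of code distances under containment (as in Proposition~\ref{alphainequality}): if $C\subseteq E$ then $d_{\min}(E)\le d_{\min}(C)$.

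For (1), recall $\alpha_k(C)=d_{\min}(C)$ and assume $C\subsetneq V$, so that $(k+1)$-dimensional supercodes exist (if $C=V$ it is maximal and $\alpha_{k+1}$ is not defined). For the forward direction, if $C$ is maximal then every $(k+1)$-dimensional supercode $D$ satisfies $d_{\min}(D)\le d_{\min}(C)$ and $d_{\min}(D)\neq d_{\min}(C)$, hence $d_{\min}(D)<d_{\min}(C)$; taking the maximum over such $D$ gives $\alpha_{k+1}(C)<\alpha_k(C)$. For the converse, if $\alpha_{k+1}(C)<\alpha_k(C)$ and $C$ were not maximal, pick $D\supsetneq C$ with $d_{\min}(D)=d_{\min}(C)$ together with any $E$ satisfying $C\subseteq E\subseteq D$ and $\dim E=k+1$; then $d_{\min}(C)=d_{\min}(D)\le d_{\min}(E)\le d_{\min}(C)$ forces $d_{\min}(E)=d_{\min}(C)$, so $\alpha_{k+1}(C)\ge d_{\min}(E)=\alpha_k(C)$, a contradiction. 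This part is routine once the nested-code inequality is invoked.

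For (3), I would apply (1) to a greedy supercode. Fix a greedy chain $C=D_k\subseteq D_{k+1}\subseteq\cdots\subseteq D_n=V$ with $d_{\min}(D_i)=\alpha_i^g(C)$, and let $D_j$ be a greedy supercode of dimension $j$. Since $\dim D_j=j$, statement (1) applied to $D_j$ says that $D_j$ is maximal if and only if $\alpha_{j+1}(D_j)<\alpha_j(D_j)=d_{\min}(D_j)$. The first supercode step is never greedy-constrained, so $\alpha_{j+1}(D_j)=\alpha_{j+1}^g(D_j)$; combined with the supercode analogue of the remark that greedy distances of a greedy code agree with those of $C$, this rewrites the maximality of $D_j$ as $\alpha_{j+1}^g(C)<\alpha_j^g(C)$. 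Thus the natural output of the reduction is a statement about the \emph{greedy} distances. The decisive step is therefore to bridge greedy and ordinary supercode distances, namely to prove $\alpha_i^g(C)=\alpha_i(C)$ for $k\le i\le n$, equivalently that every greedy supercode attains the ordinary supercode distance; I would attempt this by an exchange argument, showing that an optimal $(i+1)$-dimensional supercode always contains an optimal $i$-dimensional one, so that the greedy ascent can follow an optimal flag. Once this identity is secured, $\alpha_{j+1}(C)<\alpha_j(C)$ coincides with $\alpha_{j+1}^g(C)<\alpha_j^g(C)$, all greedy supercodes of a fixed dimension share the same minimum distance, and the equivalence above applies to \emph{every} such $D_j$ simultaneously, giving the stated ``if and only if''.

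Statement (2) is handled by the same reduction applied to greedy subcodes: if every $\alpha_i$-greedy subcode is maximal, then applying (1) along the chain yields $\alpha_{i+1}^g(C)<\alpha_i^g(C)$, and since $\alpha_{i+1}(C)\le\alpha_i(C)$ always holds by Proposition~\ref{alphainequality}, it remains only to rule out equality $\alpha_{i+1}(C)=\alpha_i(C)$; for this I would argue contrapositively, producing from any $(i+1)$-dimensional optimal subcode $E$ a non-maximal optimal $i$-dimensional subcode and exhibiting it as an $\alpha_i$-greedy subcode, contradicting the hypothesis. The main obstacle, in both (2) and (3), is precisely this reconciliation between greedy and ordinary code distances, and it behaves oppositely in the two regimes: for supercodes the identity $\alpha_i^g=\alpha_i$ should hold in full (yielding the equivalence in (3)), whereas for subcodes greedy distances can be strictly smaller than ordinary ones — as the $\FF_q^{q+1}$ example already shows — which is exactly why (2) can only be asserted as a one-way implication. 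I expect establishing the supercode identity, and controlling which optimal subcodes can be realized greedily, to be where the real work lies.
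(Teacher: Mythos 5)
Your proof of part (1) is correct, and your reduction of (2) and (3) to part (1) applied along a greedy chain is the right engine: every $\alpha_j$-greedy code has minimum distance $\alpha_j^g(C)$ by definition, so by (1) all $\alpha_j$-greedy supercodes are maximal if and only if $\max\{\alpha_{j+1}(D_j)\}<\alpha_j^g(C)$, the maximum running over all $\alpha_j$-greedy supercodes $D_j$, and that maximum is precisely $\alpha_{j+1}^g(C)$. (This is the correct form of your appeal to ``greedy distances of a greedy code agree with those of $C$'': the identity $\alpha_{j+1}^g(D_j)=\alpha_{j+1}^g(C)$ can fail for an individual greedy $D_j$; only the maximum over all of them equals $\alpha_{j+1}^g(C)$, which is all your argument needs.) The genuine gap is the bridge you then try to build to the plain code distances. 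The identity $\alpha_i^g(C)=\alpha_i(C)$ for $k\le i\le n$ is false --- an optimal $(i+1)$-dimensional supercode need not contain any optimal $i$-dimensional one --- and with it your proof of (3) collapses. Concretely, let $C\subseteq\FF_2^7$ be the $3$-dimensional code whose parity check matrix $H$ has columns $0001,0010,0100,1000,0011,0101,0110\in\FF_2^4$. Then $d_{\min}(C)=3$, the unique coset at distance $3$ from $C$ is the one with syndrome $1111$, and the cosets at distance $2$ are exactly those with syndrome $1111+h$, $h$ a column of $H$; hence adding $1111$ to any syndrome at distance $\ge 2$ gives a syndrome at distance $\le 1$. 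Consequently the unique $\alpha_4$-greedy supercode is $E=C\cup(v+C)$ with $v$ of syndrome $1111$, every $5$-dimensional supercode of $E$ has minimum distance $1$, so $\alpha_5^g(C)=1$; yet the cosets with syndromes $0111,1001,1110$ form a $2$-flat at distance $2$ from $C$, so $\alpha_5(C)=2$. The same code shows that (3) read with plain distances is false at $j=5$: one checks $\alpha_6(C)=1<2=\alpha_5(C)$, yet no $\alpha_5$-greedy supercode is maximal, since each has minimum distance $1$ and dimension $5<7$, and any proper extension still has minimum distance $1$. This is consistent with the paper, which asserts $\alpha_i^g=\alpha_i$ only for $i=1,k,k+1,N$.

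Your bridge for (2) fails for a parallel reason: a non-maximal optimal $i$-dimensional subcode of $C$ cannot in general be ``exhibited as an $\alpha_i$-greedy subcode,'' because it need not contain any $\alpha_{i-1}$-greedy subcode; the paper's example $\FF_q^{q+1}$, where the optimal $i$-dimensional subcodes are generalized extended Reed--Solomon codes containing no word of weight $q+1$, is exactly such a failure. The resolution is that the proposition is intended with greedy distances in (2) and (3), as the sentence preceding it in the paper says (``a strict inequality between consecutive \emph{greedy} code distances''): the inequalities should read $\alpha_{i+1}^g(C)<\alpha_i^g(C)$ and $\alpha_{j+1}^g(C)<\alpha_j^g(C)$, while in part (1) the two readings agree since $\alpha_k^g=\alpha_k$ and $\alpha_{k+1}^g=\alpha_{k+1}$. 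Under that reading, the greedy-level arguments you already wrote constitute a complete proof (the paper itself offers none, treating the statement as immediate), and no bridge is needed or possible. One last correction: the asymmetry between (2) and (3) is not caused by the discrepancy between greedy and plain subcode distances. Even at the greedy level the converse of (2) fails, because the witness of non-maximality of a subcode of $C$ need not lie inside $C$, whereas for supercodes every witness lives in $V$; this is exactly what the paper's example following the proposition illustrates, where $\langle(1,1,0)\rangle$ is non-maximal only on account of the even weight code, which is not a subcode of $C$.
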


The next example shows that the converse of the implication in part (2) does not hold in general.

\begin{example}
Let $C=\langle(1,0,0),(0,1,0)\rangle\subseteq\FF_2^3$. One can check that $\alpha_1^g(C)=2$ and $\alpha_2^g(C)=\alpha_3^g(C)=1$. The only $\alpha_1$-greedy subcode of $C$ is $D_1=\langle(1,1,0)\rangle$, which is not maximal, since it is contained in the even weight code in $\FF_2^3$. 

Notice moreover that, since $\alpha_2^g(C)=\alpha_3^g(C)=1$, we expect $C$ to have a non-maximal $\alpha_2$-greedy supercode. This is indeed the case, as the only $\alpha_2$-greedy supercode of $C$ is $C$ itself, which is not maximal.
\end{example}

In the case of rank-metric codes, an invariant related to the concept of maximality was defined by Byrne and Ravagnani in~\cite[Definition~2.2]{BR17}. We extend their definition to our more general setting.

\begin{definition}
Let $0\neq C \subsetneq V$ be a code. The {\bf maximality degree} of $C$ is 
$$\mu(C)=\min\{d_{\min}(C)-d_{\min}(D)\mid D\supsetneq C \mbox{ is a code}\}.$$
\end{definition}

By definition, $0\leq \mu(C)\leq d_{\min}(C)-1$. Moreover, $\mu(C)>0$ if and only if $C$ is a maximal code.
It is easy to relate the maximality degree with the first supercode distance.

\begin{proposition}\label{prop:mu}
Let $0\neq C\subseteq V$ be a $k$-dimensional code.
Then $$\alpha_{k+1}(C)=d_{\min}(C)-\mu(C).$$
\end{proposition}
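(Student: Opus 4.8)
The plan is to rewrite the maximality degree so that the minimization becomes a maximization over minimum distances, and then observe that this maximum is always attained in dimension $k+1$. Unwinding the definition gives
$$\mu(C)=d_{\min}(C)-\max\{d_{\min}(D)\mid D\supsetneq C\text{ is a code}\},$$
so the claim is equivalent to showing that
$$\max\{d_{\min}(D)\mid D\supsetneq C\}=\max\{d_{\min}(D)\mid D\supseteq C,\ \dim D=k+1\}=\alpha_{k+1}(C).$$

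First I would recall the basic monotonicity of the minimum distance under inclusion: if $E\subseteq D$ are codes, then $d_{\min}(E)\geq d_{\min}(D)$, since every nonzero codeword of $E$ is a nonzero codeword of $D$. This is exactly the inequality already used in the proof of Proposition~\ref{alphainequality}. The consequence is that enlarging a supercode can only decrease (or preserve) its minimum distance, so among all proper supercodes of $C$ the largest minimum distance is achieved by the smallest possible ones.

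Next I would make this precise. Given any proper supercode $D\supsetneq C$, pick a vector $v\in D\setminus C$ and set $E=C+\langle v\rangle$, so that $C\subsetneq E\subseteq D$ and $\dim E=k+1$. By the monotonicity above, $d_{\min}(E)\geq d_{\min}(D)$. Hence every value $d_{\min}(D)$ appearing in the left-hand maximum is dominated by the value $d_{\min}(E)$ of a $(k+1)$-dimensional supercode; conversely every $(k+1)$-dimensional supercode is a proper supercode of $C$ since $C\neq 0$ forces $\dim C=k<k+1$. The two maxima therefore coincide and equal $\alpha_{k+1}(C)$ by the definition of the supercode distance. Substituting back yields $\alpha_{k+1}(C)=d_{\min}(C)-\mu(C)$.

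There is essentially no serious obstacle here: the only point to handle with a little care is the degenerate reading of the maxima (both sets are nonempty precisely because $C\subsetneq V$ guarantees a proper supercode exists, which is part of the hypothesis $0\neq C\subseteq V$ together with the ambient space admitting $(k+1)$-dimensional subspaces), and the mild bookkeeping of passing from a minimization of a difference to a maximization of $d_{\min}$. The heart of the argument is the one-line monotonicity observation that forces the optimum to sit in dimension $k+1$.
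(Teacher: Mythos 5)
Your proof is correct and follows essentially the same route as the paper's: both unwind the definition of $\mu(C)$ and use monotonicity of the minimum distance under inclusion to show that the optimum among proper supercodes of $C$ is attained in dimension exactly $k+1$. The only cosmetic difference is that the paper invokes the already-established chain $\alpha_h(C)\leq\alpha_{k+1}(C)$ from Proposition~\ref{alphainequality}, whereas you re-derive the needed special case inline by passing to the intermediate code $E=C+\langle v\rangle$.
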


\begin{proof}
It follows directly from the definition that $d_{\min}(C)-\alpha_{k+1}(C)\geq\mu(C)$, that is $\alpha_{k+1}(C)\leq d_{\min}(C)-\mu(C)$. Moreover, let $D\supsetneq C$ be a code such that $\mu(C)=d_{\min}(C)-d_{\min}(D)$ and let $h=\dim(D)$. Then $h\geq k+1$ and $d_{\min}(C)-\mu(C)=d_{\min}(D)=\alpha_h(C)\leq\alpha_{k+1}(C)$.
\end{proof}






The covering radius is an invariant of a code which is not an invariant of its associated matroid for codes endowed with the Hamming metric and not an invariant of the associated $q$-matroid or $q$-polymatroid for rank-metric codes. This was observed in~\cite{BR05} and~\cite[Example~6.8]{GJLR20}, respectively.

\begin{definition}
Let $0\neq C\subseteq V$ be a code. The {\bf covering radius} of $C$ is $$\rho(C)=\max\{d(C,v)\mid v\in V\}.$$
\end{definition}

We now propose a family of invariants that extend the covering radius. Notice that these are different from generalized covering radii, as defined in~\cite{EFS21}. The relation between code distances and covering radii will be discussed in the last section.

\begin{definition}
Let $C\subseteq V$ be a $k$-dimensional code. For $1\leq i\leq N-k$, we define the $i$-th {\bf generalized radius} of $C$ as
$$\rho_i(C)=\max\{d(C,D)\mid\dim D=i\},$$ where $d(C,D)=\min\{d(x,y)\mid x\in C, y\in D\setminus 0\}$.
\end{definition}

\begin{remark}
Notice that $$\rho_i(C)=\max\{d(C,D)\mid D\cap C=0,\ \dim D=i\}.$$ In fact, if $D\cap C\neq 0$, then $d(C,D)=0$. Notice moreover that, if $C\cap D=0$, then $$d(C,D)=\min\{\wt(v)\mid v\in (C+D)\setminus C\}.$$ This shows in particular that $d(C,D)$ and $d(D,C)$ are not equal, in general.
\end{remark}

The next proposition follows directly from the definition.

\begin{proposition}
Let $C\subseteq V$ be a $k$-dimensional code. Then,
$$\rho(C)=\rho_1(C)\geq\rho_2(C)\geq\ldots\geq\rho_{N-k}(C)=\min\{\wt(v)\mid v\not\in C\}.$$
In particular, if $V$ is generated by elements of weight one, then $\rho_{N-k}(C)=1$.
\end{proposition}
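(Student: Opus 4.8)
The plan is to derive every assertion directly from the definitions together with the reformulation in the preceding remark, handling the four claims in turn.

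First I would prove $\rho(C)=\rho_1(C)$. For a one-dimensional code $D=\langle v\rangle$ with $v\notin C$ one has $(C+D)\setminus C=\bigcup_{\lambda\neq 0}(\lambda v+C)$, so the preceding remark gives
$$d(C,\langle v\rangle)=\min_{\lambda\in\FF_q\setminus 0}\min\{\wt(w)\mid w\in\lambda v+C\}=\min_{\lambda\in\FF_q\setminus 0}d(C,\lambda v).$$
Since the Hamming, rank, and sum-rank weights satisfy $\wt(\lambda u)=\wt(u)$ for every nonzero scalar $\lambda$, and $C$ is linear, multiplication by $\lambda$ maps the coset $v+C$ bijectively onto $\lambda v+C$ while preserving weights; hence $d(C,\lambda v)=d(C,v)$ for all $\lambda\neq 0$. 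Therefore $d(C,\langle v\rangle)=d(C,v)$, and maximizing over all lines $\langle v\rangle$ with $v\notin C$ yields $\rho_1(C)=\max_{v\notin C}d(C,v)=\rho(C)$.

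Next I would establish the monotonicity $\rho_i(C)\geq\rho_{i+1}(C)$. Using the remark, choose $D$ with $\dim D=i+1$, $D\cap C=0$, and $d(C,D)=\rho_{i+1}(C)$, and pick any $i$-dimensional subspace $E\subseteq D$. Then $E\cap C=0$, and since $E\setminus 0\subseteq D\setminus 0$ the minimum defining $d(C,E)$ is taken over a subset of the one defining $d(C,D)$, so $d(C,E)\geq d(C,D)=\rho_{i+1}(C)$. As $E$ is an admissible competitor for $\rho_i(C)$, this gives $\rho_i(C)\geq\rho_{i+1}(C)$. For the lower endpoint, any $D$ with $\dim D=N-k$ and $D\cap C=0$ satisfies $C\oplus D=V$, hence $C+D=V$ and $(C+D)\setminus C=V\setminus C$; the remark then gives $d(C,D)=\min\{\wt(v)\mid v\notin C\}$, a value independent of the chosen complement. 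Since complements always exist, $\rho_{N-k}(C)=\min\{\wt(v)\mid v\notin C\}$. Finally, if $V$ is generated by weight-one elements and $C\subsetneq V$, not all such generators can lie in $C$ (otherwise they would span $C\supseteq V$), so there is a weight-one vector outside $C$; as every nonzero vector has weight at least one, $\rho_{N-k}(C)=1$.

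The only step that is not purely formal is the identity $\rho_1(C)=\rho(C)$: reducing $d(C,\langle v\rangle)$ to a minimum over the punctured line $\{\lambda v\mid\lambda\neq 0\}$ could a priori make $\rho_1(C)$ strictly smaller than $\rho(C)$, and the argument genuinely relies on the scalar-invariance $\wt(\lambda u)=\wt(u)$ of the metrics under consideration to collapse that minimum to the single value $d(C,v)$. Everything else reduces to the elementary observation that shrinking the subspace over which a minimum is computed can only increase its value.
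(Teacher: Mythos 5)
Your proof is correct, and it is precisely the direct verification from the definitions (via the remark reformulating $d(C,D)$) that the paper has in mind: the paper states this proposition with no written proof, remarking only that it ``follows directly from the definition.'' Your additional observation that $\rho(C)=\rho_1(C)$ relies on the scalar-invariance $\wt(\lambda u)=\wt(u)$ of the Hamming, rank, and sum-rank weights is a worthwhile point that the paper leaves implicit.
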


We establish a relation between generalized radii and (greedy) generalized subcode distances. 

\begin{theorem}\label{thm:radii}
For any $k$-dimensional code $C\subseteq V$ and $1\leq i\leq N-k$, we have
$$\alpha_{k+i}(C)=\min\{d_{\min}(C),\rho_i(C)\}$$ and $$\alpha_{k+i}^g(C)=\min\{\alpha^g_{k+i-1}(C),\rho(D_{k+i-1})\},$$ where $D_1\subseteq D_2\subseteq\ldots\subseteq D_k=C$ is a chain of $\alpha_i$-greedy subcodes of $C$.
\end{theorem}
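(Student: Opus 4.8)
The plan is to prove the two formulas separately, starting with the non-greedy identity, which is the conceptual core, and then bootstrapping the greedy version from it. For the first formula, I would fix a $k$-dimensional code $C$ and an index $1\leq i\leq N-k$, and analyze $\alpha_{k+i}(C)=\max\{d_{\min}(D)\mid D\supseteq C,\ \dim D=k+i\}$. The key observation is that any $(k+i)$-dimensional supercode $D\supseteq C$ can be written as $D=C+E$ where $E$ is an $i$-dimensional complement with $E\cap C=0$, and conversely every such $E$ yields a supercode of the right dimension. This sets up the bijection between supercodes of $C$ and the subspaces $E$ appearing in the definition of $\rho_i(C)$. The crux is then to compute $d_{\min}(C+E)$ in terms of data about $C$ and $E$: a nonzero codeword of $C+E$ either lies in $C$, contributing weight at least $d_{\min}(C)$, or lies in $(C+E)\setminus C$, contributing weight at least $d(C,E)$ by the remark preceding the theorem. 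Thus $d_{\min}(C+E)=\min\{d_{\min}(C),\ d(C,E)\}$.

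Having established $d_{\min}(C+E)=\min\{d_{\min}(C),d(C,E)\}$, I would take the maximum over all valid $E$. Since $d_{\min}(C)$ is a constant independent of $E$, the maximum distributes through the minimum:
\[
\alpha_{k+i}(C)=\max_E \min\{d_{\min}(C),d(C,E)\}=\min\{d_{\min}(C),\ \max_E d(C,E)\}=\min\{d_{\min}(C),\rho_i(C)\},
\]
where the last equality uses the remark that restricting to $E$ with $E\cap C=0$ does not change $\rho_i(C)$. The one point requiring care here is the interchange of $\max$ and $\min$: it is valid precisely because one of the two arguments, namely $d_{\min}(C)$, is constant, so I would state this explicitly rather than appeal to a general minimax principle.

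For the greedy formula, I would unwind the definition of the greedy supercode distance: $\alpha_{k+i}^g(C)$ is the maximum $d_{\min}$ among $(k+i)$-dimensional supercodes that contain a fixed $\alpha_{k+i-1}$-greedy supercode $D_{k+i-1}$. The idea is to apply the already-proven first formula, but now with $D_{k+i-1}$ playing the role of the base code $C$: a greedy supercode of dimension $k+i$ is obtained by adjoining a one-dimensional complement to $D_{k+i-1}$, so $\alpha_{k+i}^g(C)=\alpha_{(k+i-1)+1}(D_{k+i-1})=\min\{d_{\min}(D_{k+i-1}),\rho_1(D_{k+i-1})\}$. Since $d_{\min}(D_{k+i-1})=\alpha_{k+i-1}^g(C)$ by the chain property of the greedy codes, and $\rho_1(D_{k+i-1})=\rho(D_{k+i-1})$ by the preceding proposition, this gives exactly the claimed identity $\alpha_{k+i}^g(C)=\min\{\alpha_{k+i-1}^g(C),\rho(D_{k+i-1})\}$.

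The main obstacle I anticipate is the clean characterization $d_{\min}(C+E)=\min\{d_{\min}(C),d(C,E)\}$ and, in particular, making sure the lower bound on weights of codewords in $(C+E)\setminus C$ is exactly $d(C,E)$ and is achieved, so that taking maxima is legitimate; the remark preceding the theorem, which identifies $d(C,E)$ with $\min\{\wt(v)\mid v\in(C+E)\setminus C\}$ when $C\cap E=0$, is precisely the tool that removes this difficulty. The rest is bookkeeping about dimensions of complements and the recursive structure of greedy chains.
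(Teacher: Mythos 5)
Your proposal is correct and follows essentially the same route as the paper's proof: decompose each $(k+i)$-dimensional supercode as $C+E$ with $E\cap C=0$, establish $d_{\min}(C+E)=\min\{d_{\min}(C),d(C,E)\}$ via the remark identifying $d(C,E)$ with $\min\{\wt(v)\mid v\in (C+E)\setminus C\}$, maximize over $E$ (the max--min interchange being valid since $d_{\min}(C)$ is constant), and then obtain the greedy identity by applying the first formula with $D_{k+i-1}$ as the base code together with $d_{\min}(D_{k+i-1})=\alpha^g_{k+i-1}(C)$ and $\rho_1=\rho$. The only cosmetic difference is that you call the correspondence $E\mapsto C+E$ a bijection when it is merely a surjection onto the set of supercodes, which is all that is needed.
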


\begin{proof}
Let $D\subseteq\FF_q^n$ be an $i$-dimensional code such that $C\cap D=0$. Then $C+D$ is a $(k+i)$-dimensional code with $$d_{\min}(C+D)=\min\{d_{\min}(C),d(C,D)\}.$$ In fact, for any $v\in C+D\setminus 0$ one has $\wt(v)\geq d_{\min}(C)$ if $v\in C\setminus 0$ and $\wt(v)\geq d(C,D)$ if $v\in C+D\setminus C$. Moreover, there exist $u\in C\subseteq C+D$ with $\wt(u)=d_{\min}(C)$ and $w\in C+D\setminus C$ with $\wt(w)=d(C,D)$. 

By definition of generalized radius, for any $i$-dimensional code $D\subseteq V$ such that $C\cap D=0$ one has $d(C,D)\leq\rho_i(C)$, hence the largest value of $d_{\min}(C+D)$ is achieved by a $D$ with $d(C,D)=\rho_i(C)$ and is $$d_{\min}(C+D)=\min\{d_{\min}(C),\rho_i(C)\}.$$ This proves the first equality.

The second equality follows from the first one for $i=1$, since $\alpha_{k+1}^g(C)=\alpha_{k+1}(C)$, $\alpha_{k}^g(C)=\alpha_{k}(C)=d_{\min}(C)$, and $\rho_1(C)=\rho(D_k)$ as $D_k=C$. In general, $$\alpha_{k+i}^g(C)=\alpha_{k+i}(D_{k+i-1})=\min\{d_{\min}(D_{k+i-1}),\rho(D_{k+i-1})\}
=\min\{\alpha^g_{k+i-1}(C),\rho(D_{k+i-1})\},$$
where the first and third equalities follow from the definition of greedy code distances and the second equality follows from the first part of the statement.   
\end{proof}

For a block code endowed with the Hamming metric, we now define its strongly linearly dependent set. 

\begin{definition}
Let $H$ be a parity check matrix of $C \subseteq \FF_q^n$. The {\bf strongly linearly dependent} set of $C$ is
\[SLD(C) = \{ s \in \mathbb{N} : s \text{ columns of } H \text{ are } s.l.d. \}.\]
\end{definition}

Notice that, even though the strongly linear dependent set is defined in terms of a parity check matrix, it does not depend on it, since it is the set of cardinalities of supports of the nonzero codewords.


\begin{example}\label{25.03.07}
Let $C\subseteq \FF_2^4$ be the even weight code, whose parity check matrix is $H=(1\; 1\; 1\; 1)$. Any two or four columns of $H$ are s.l.d., and any single or three columns of $H$ are not s.l.d. Then,
\[SLD(C) = \{2,4\}.\]
\end{example}

\begin{example}\label{25.03.08}
Let $C = \FF_2^3$. The parity check matrix is $H=(0\; 0\; 0)$. As any number of columns is s.l.d., then
\[SLD(C) = \{1,2,3\}.\]
\end{example}

The next result shows that the code distances are elements of $SLD(C)$. In particular, $\alpha_1(C)$ and $\alpha_k(C)$ can be obtained from the strongly linear dependent set of~$C$.

\begin{proposition}\label{alpha strong}
Let $C \subseteq V$ be a linear code and let $H$ be its parity check matrix. Then
\[\{\alpha_1(C), \ldots, \alpha_k(C)\} \subseteq SLD(C).\]
Moreover, $\alpha_1(C)$ and $\alpha_k(C)$ are respectively the maximum and minimum of $SLD(C)$.
\end{proposition}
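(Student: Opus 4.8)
The plan is to reduce the entire statement to the single observation, already recorded right after the definition of $SLD(C)$, that $SLD(C)$ coincides with the set of Hamming weights of the nonzero codewords of $C$, that is,
$$SLD(C)=\{\wt(v)\mid v\in C\setminus 0\}.$$
Once this reinterpretation is in hand, both assertions are short.

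First I would establish the inclusion $\{\alpha_1(C),\ldots,\alpha_k(C)\}\subseteq SLD(C)$. Fix $1\leq i\leq k$. By definition of the $i$-th subcode distance there is an $i$-dimensional subcode $D\subseteq C$ with $d_{\min}(D)=\alpha_i(C)$. Since the minimum distance of a nonzero code is attained by some codeword, there exists $v\in D\setminus 0$ with $\wt(v)=d_{\min}(D)=\alpha_i(C)$. As $D\subseteq C$, this $v$ is a nonzero codeword of $C$ whose weight is exactly $\alpha_i(C)$, so $\alpha_i(C)\in SLD(C)$ by the weight-set description above. Since $i$ was arbitrary, the inclusion follows.

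For the ``Moreover'' part I would invoke the identities $\alpha_1(C)=\maxwt(C)$ and $\alpha_k(C)=d_{\min}(C)$ noted just before Proposition~\ref{alphainequality}, together with Corollary~\ref{25.03.01}, which identifies $\maxwt(C)$ and $d_{\min}(C)$ with, respectively, the maximum and minimum number of strongly linearly dependent columns of $H$; by definition of $SLD(C)$ these are exactly $\max SLD(C)$ and $\min SLD(C)$. Chaining the identities yields $\alpha_1(C)=\max SLD(C)$ and $\alpha_k(C)=\min SLD(C)$.

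I do not expect a genuine obstacle here: all the content sits in the earlier remark that $SLD(C)$ is the weight set of $C$ and in Corollary~\ref{25.03.01}. The only point that needs care is verifying that each $\alpha_i(C)$ is realized as the weight of an \emph{actual} codeword of $C$, not merely bounded by codeword weights; this is immediate because the minimum distance of any subcode $D\subseteq C$ is attained by a codeword which still belongs to $C$.
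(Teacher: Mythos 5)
Your proof is correct and follows essentially the same route as the paper's: each $\alpha_i(C)$ is realized as the weight of an actual codeword of $C$ (a minimum-weight word of an optimal $i$-dimensional subcode), which exhibits a strongly linearly dependent set of columns of $H$, and the ``Moreover'' part is obtained from the identities $\alpha_1(C)=\maxwt(C)$, $\alpha_k(C)=d_{\min}(C)$ together with Corollary~\ref{25.03.01}. Your explicit use of the remark that $SLD(C)$ equals the weight set of $C\setminus 0$ is just a repackaging of the same observation the paper makes inline.
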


\begin{proof}
For every $\alpha_i(C)$, $1\leq i\leq k$, there is an element $c=(c_1\ldots,c_n) \in C$ such that $\alpha_i(C) = \wt(c)$. Let $H$ be a parity check matrix of $C$. As the columns of $H$ in position $j$ such that $c_j \neq 0$ are s.l.d., then $\alpha_i(C) \in SLD(C)$. In particular, $\alpha_1(C)$ and $\alpha_k(C)$ are respectively the maximum and minimum number of strongly linearly dependent columns of $H$ by Corollary~\ref{25.03.01} and Proposition~\ref{alphainequality}.
\end{proof}

\section{Asymptotic code distances}\label{asymtotic}

In this section, for a code defined over $\FF_q$, we define code distances for the extension of the code to every finite field that contains $\FF_q$. 

\begin{definition}
Let $C\subseteq V$ be a $k$-dimensional code. For $\ell\geq 1$, let $C_\ell=C\otimes_{\FF_q}\FF_{q^\ell}\subseteq\FF_{q^\ell}^n$ be the {\bf extension} of $C$ to $\FF_{q^\ell}$.
Let $1\leq i\leq n$. The $i$-th {\bf extended code distance} of $C$ relative to the field extension $\FF_{q^\ell}/\FF_q$ is 
$$\alpha_i^\ell(C)=\alpha_i(C_{\ell}).$$
The $i$-th {\bf asymptotic code distance} of $C$ is 
$$\alpha^\infty_i(C)=\max\{\alpha_i^\ell(C)\ : \ \ell\geq 1\}.$$
\end{definition}

Notice that for the Hamming, rank, and sum-rank distances $$\alpha_k(C)=d_{\min}(C)=d_{\min}(C_\ell)=\alpha_k^\ell(C)\; \mbox{ for all } \ell\geq 1.$$
However, extended code distances may differ from the corresponding code distances, as the next example shows.

\begin{example}
Let $C=\langle(1,1,0),(0,1,1)\rangle\subseteq\FF_2^3$ be the even weight code. Then $$\alpha_1(C)=2<3=\alpha_1^2(C).$$ In fact, $C$ contains no word of weight $3$, but $C\otimes_{\FF_2}\FF_2(\eta)\ni (1,1+\eta,\eta)$.
\end{example}

Notice moreover that asymptotic code distances exist and are finite whenever the weight function on $V$ only takes a finite number of values. This is the case, e.g., when $V=\FF_q^n$ and $d$ is the Hamming distance, when $V=\FF_q^{m\times n}$ and $d$ is the rank distance, or when $V=\FF_q^{m_1\times n_1}\times\cdots\times\FF_q^{m_\ell\times n_\ell}$ and $d$ is the sum-rank distance.

\medskip 

Code distances and extended code distances are related as follows under a mild hypothesis. The hypothesis is satisfied by many metrics, including the Hamming, rank, and sum-rank distances.

\begin{theorem}
Let $C\subseteq V$ be a code and assume that the weight function is such that $d_{\min}(D)=d_{\min}(D\otimes\mathbb{F}_{q^\ell})$ for any $D\subseteq V$. Then 
$$\alpha_i(C)\leq \alpha_i^{\ell}(C)\leq \alpha^\infty_i(C)$$
for $1\leq i\leq N$ and $\ell\geq 1$.  
Moreover, for every $1\leq i\leq N$ and every $\ell\geq 1$ there exists $\lambda\geq \ell$ such that
$$\alpha^\infty_i(C)=\alpha_i^{\lambda}(C).$$
\end{theorem}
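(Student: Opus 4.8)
The plan is to prove the three-part statement by carefully using the hypothesis $d_{\min}(D)=d_{\min}(D\otimes\mathbb{F}_{q^\ell})$ together with the behavior of tensoring under containment. First I would establish the chain of inequalities $\alpha_i(C)\leq\alpha_i^\ell(C)\leq\alpha_i^\infty(C)$. The rightmost inequality is immediate from the definition of $\alpha_i^\infty$ as a maximum over all $\ell\geq 1$. For the leftmost inequality, I would split into the subcode and supercode ranges. In the subcode range $1\leq i\leq k$, take a subcode $D\subseteq C$ with $\dim D=i$ achieving $d_{\min}(D)=\alpha_i(C)$; then $D_\ell=D\otimes\mathbb{F}_{q^\ell}$ is an $i$-dimensional subcode of $C_\ell$, and by hypothesis $d_{\min}(D_\ell)=d_{\min}(D)=\alpha_i(C)$, so $\alpha_i^\ell(C)=\alpha_i(C_\ell)\geq\alpha_i(C)$. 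In the supercode range $k\leq i\leq N$, take a supercode $D\supseteq C$ with $\dim D=i$ achieving $d_{\min}(D)=\alpha_i(C)$; then $D_\ell\supseteq C_\ell$ is an $i$-dimensional supercode of $C_\ell$, and the same hypothesis gives $d_{\min}(D_\ell)=\alpha_i(C)$, hence $\alpha_i^\ell(C)\geq\alpha_i(C)$ again.

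The key structural fact I am using is that tensoring with $\mathbb{F}_{q^\ell}$ preserves dimension and both containments (subcode and supercode), so it maps an optimal subcode or supercode of $C$ to an admissible competitor for $\alpha_i(C_\ell)$. I would state this explicitly as a short observation before the two cases, since it is the engine behind the first inequality.

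For the final ``moreover'' claim, I want to show that for every $i$ and every $\ell$ there exists $\lambda\geq\ell$ with $\alpha_i^\infty(C)=\alpha_i^\lambda(C)$. The idea is that $\alpha_i^\infty(C)$ is attained at some finite level, and that the extended code distances are monotone along a tower of extensions. Concretely, if $\lambda_0$ is a level where the maximum $\alpha_i^\infty(C)=\alpha_i^{\lambda_0}(C)$ is achieved (such $\lambda_0$ exists because the weight function takes only finitely many values, so $\alpha_i^\ell(C)$ ranges over a finite set and the supremum is a maximum), I would take $\lambda=\mathrm{lcm}(\ell,\lambda_0)$, or more simply any common multiple, so that $\mathbb{F}_{q^{\lambda_0}}\subseteq\mathbb{F}_{q^\lambda}$ and $\mathbb{F}_{q^\ell}\subseteq\mathbb{F}_{q^\lambda}$. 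Then $C_\lambda=(C_{\lambda_0})\otimes\mathbb{F}_{q^\lambda}$, and applying the first part of the theorem with $C_{\lambda_0}$ in place of $C$ gives $\alpha_i^{\lambda_0}(C)=\alpha_i(C_{\lambda_0})\leq\alpha_i(C_\lambda)=\alpha_i^\lambda(C)\leq\alpha_i^\infty(C)=\alpha_i^{\lambda_0}(C)$, forcing equality $\alpha_i^\lambda(C)=\alpha_i^\infty(C)$ with $\lambda\geq\ell$.

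The main obstacle I anticipate is the ``moreover'' part, specifically the compatibility of iterated tensor products, namely that $(C\otimes\mathbb{F}_{q^{\lambda_0}})\otimes\mathbb{F}_{q^\lambda}\cong C\otimes\mathbb{F}_{q^\lambda}$ whenever $\mathbb{F}_{q^{\lambda_0}}\subseteq\mathbb{F}_{q^\lambda}$, and that the hypothesis $d_{\min}(D)=d_{\min}(D\otimes\mathbb{F}_{q^\ell})$ is applied at the intermediate field rather than at $\mathbb{F}_q$. I would want to check that the divisibility conditions on the extension degrees (so that $\mathbb{F}_{q^{\lambda_0}}$ and $\mathbb{F}_{q^\ell}$ genuinely embed in $\mathbb{F}_{q^\lambda}$) are handled by choosing $\lambda$ to be a common multiple, and that the hypothesis, being stated for arbitrary $D\subseteq V$ and arbitrary $\ell$, applies equally well after one extension. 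Modulo this bookkeeping, the argument is a clean combination of the definitional monotonicity and the finiteness of the value set of the weight function.
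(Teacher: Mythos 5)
Your proposal is correct and follows essentially the same route as the paper: for the first inequality, tensor an optimal sub- or supercode with $\mathbb{F}_{q^\ell}$ (this preserves dimension, containment, and, by hypothesis, minimum distance), and the second inequality is definitional. For the ``moreover'' part, the paper likewise picks a level $t$ attaining the maximum and sets $\lambda=st\geq\ell$ for a suitable multiple of $t$, relying on the same tower monotonicity $\alpha_i^t(C)\leq\alpha_i^{st}(C)$ that you justify (in fact a bit more carefully, since you flag the intermediate-field hypothesis) by applying the first part to $C_{\lambda_0}$.
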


\begin{proof}
Let $D\subseteq V$ be an $i$-dimensional code with $d_{\min}(D)=\alpha_i(C)$ and such that $D\subseteq C$ if $i\leq k$ or $D\supseteq C$ if $i\geq k$.
Then $D_\ell=D\otimes_{\FF_q}\FF_{q^\ell}\subseteq C_\ell$ is an $i$-dimensional sub- or supercode of $C$ such that $d_{\min}(D_\ell)=d_{\min}(D)$. Therefore, $\alpha_i^\ell(C)\geq d_{\min}(D_\ell)=\alpha_i(C)$. Moreover, $\alpha_i^\ell(C)\leq \alpha^\infty_i(C)$ by definition. 

Let $t\geq 1$ be such that $\alpha^\infty_i(C)=\alpha_i^t(C)$. Since $\alpha_i^t(C)\leq\alpha_i^{st}(C)\leq\alpha_i^\infty(C)$, one has that $\alpha_i^{st}(C)=\alpha_i^\infty(C)$ for every $s\geq 1$. In particular, one can choose $s$ so that $st\geq\ell$ and let $\lambda=st$.
\end{proof}

For the Hamming case, we can prove a stronger result, i.e., that the sequence of $\{\alpha_i^\ell(C)\}_{\ell=1}^\infty$ stabilizes. 

\begin{theorem}
 Let $C\subseteq\mathbb{F}_q^n$. For any $1\leq i\leq k$, there is $t\geq 1$ such that for any $\ell\geq t$, 
 $$\alpha_i^\ell(C)=\alpha_i^\infty(C).$$
\end{theorem}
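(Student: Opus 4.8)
The plan is to recast the whole question as counting subspaces, and to show that the sequence $\{\alpha_i^\ell(C)\}_\ell$ is squeezed between a value it can never exceed and the same value, which it attains for all large $\ell$. View $C_\ell$ as a $k$-dimensional $\FF_{q^\ell}$-space. For a set of coordinates $T\subseteq\{1,\dots,n\}$, let $C_T=\{c\in C:\operatorname{supp}(c)\subseteq T\}$ be the subcode of $C$ supported on $T$; it is an $\FF_q$-subspace of some dimension $d_T$, and its extension $(C_T)_\ell$ coincides with the subspace $(C_\ell)_T$ of $C_\ell$ supported on $T$, so this dimension is $d_T$ \emph{independent of} $\ell$. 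The starting observation is that an $i$-dimensional subcode $D\subseteq C_\ell$ has $d_{\min}(D)\ge s$ if and only if $D\cap (C_\ell)_T=0$ for every $T$ with $|T|=s-1$, since a nonzero codeword of weight $\le s-1$ is exactly a nonzero element of $D$ supported on such a $T$. Hence $\alpha_i^\ell(C)\ge s$ if and only if the $k$-dimensional space $C_\ell$ contains an $i$-dimensional subspace transverse to all the finitely many subspaces $(C_\ell)_T$, $|T|=s-1$.

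Next I would isolate the candidate limit. Define $s^\ast$ to be the largest $s$ such that $d_T\le k-i$ for every $T$ with $|T|=s-1$; this is well defined, since $T'\subseteq T$ forces $C_{T'}\subseteq C_T$, so the condition is inherited by smaller sets and holds vacuously for $s=1$. Two easy directions bracket the sequence. If $s=s^\ast+1$, then some $T$ with $|T|=s^\ast$ has $d_T\ge k-i+1$; as two subspaces of a $k$-dimensional space whose dimensions sum to more than $k$ must meet nontrivially, \emph{every} $i$-dimensional $D\subseteq C_\ell$ meets $(C_\ell)_T$ over \emph{every} $\FF_{q^\ell}$. Thus $\alpha_i^\ell(C)\le s^\ast$ for all $\ell\ge 1$, and in particular $\alpha_i^\infty(C)\le s^\ast$.

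The heart of the argument is the reverse inequality for large $\ell$: when $d_T\le k-i$ for all $T$ with $|T|=s^\ast-1$, the space $C_\ell$ does contain a transverse $i$-dimensional subspace once $\ell$ is large. Here I would count. For a fixed $d$-dimensional subspace $U$ of a $k$-dimensional $\FF_{q^\ell}$-space, the number of $i$-dimensional subspaces meeting $U$ trivially is $q^{\ell d i}\binom{k-d}{i}_{q^\ell}$, which differs from the total $\binom{k}{i}_{q^\ell}$ only in lower-order terms: when $d\le k-i$ both equal $q^{\ell\, i(k-i)}\big(1+O(q^{-\ell})\big)$, so the number of $i$-dimensional subspaces \emph{not} transverse to $U$ is $O\!\big(q^{\ell(i(k-i)-1)}\big)$. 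Summing this bound over the (constantly many, $\ell$-independent) sets $T$ with $|T|=s^\ast-1$ and subtracting from $\binom{k}{i}_{q^\ell}$ shows that, for all sufficiently large $\ell$, a positive number of $i$-dimensional subspaces of $C_\ell$ remain transverse to every $(C_\ell)_T$. Hence there is a threshold $t$ with $\alpha_i^\ell(C)\ge s^\ast$ for $\ell\ge t$. Combined with the previous paragraph this gives $\alpha_i^\ell(C)=s^\ast$ for all $\ell\ge t$ and $\alpha_i^\infty(C)=s^\ast$, which is the claim.

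The main obstacle is precisely obtaining the conclusion for \emph{all} large $\ell$ rather than for a merely cofinal set: the naive monotonicity $\alpha_i^\ell\le\alpha_i^{\ell'}$ for $\ell\mid\ell'$ only controls $\ell$ along divisibility chains, so the transversality count is what does the real work, and I would state the Gaussian-binomial estimate carefully. Equivalently, one can phrase this step geometrically: the transverse subspaces form a nonempty open, $\FF_q$-rational, geometrically irreducible subvariety of the Grassmannian $\mathrm{Gr}(i,k)$, whose $\FF_{q^\ell}$-points are nonempty for $\ell\gg 0$ by Lang--Weil. I would also record the byproduct that $s^\ast$ equals the $(k-i+1)$-st generalized Hamming weight of $C$, giving the clean closed form $\alpha_i^\infty(C)=d_{k-i+1}(C)$.
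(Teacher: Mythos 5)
Your proposal is correct, but it takes a genuinely different route from the paper. The paper's proof parametrizes the $i$-dimensional subcodes of $C_\ell$ by appending $k-i$ rows of indeterminates to a parity check matrix $H$ of $C$, sets $d=\alpha_i^\infty(C)$, and cuts out the bad choices by a determinantal ideal $I_{C,i}$ (maximal minors detecting a rank drop, together with minors detecting $d-1$ dependent columns); it then observes that any nonzero element of the ideal generated by the field equations $x_{gh}^{q^\ell}-x_{gh}$ has degree at least $q^\ell$, so as soon as $q^\ell$ exceeds the least degree of a nonzero element of $I_{C,i}$ there is an $\FF_{q^\ell}$-point avoiding $Z(I_{C,i})$, i.e.\ a subcode of dimension $i$ and minimum distance $d$. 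You instead work intrinsically inside $C_\ell$: the minimum-distance condition becomes transversality to the finitely many support subspaces $(C_T)_\ell=(C_\ell)_T$, whose dimensions $d_T$ do not depend on $\ell$ (extension of scalars commutes with intersection --- worth stating explicitly), and a Gaussian-binomial count shows a transverse $i$-dimensional subspace exists once $q^\ell$ is large. Both arguments are of the type ``a generic choice works over a sufficiently large field,'' and both correctly address the point that monotonicity of $\alpha_i^\ell$ holds only along divisibility chains. Your route buys three things the paper's does not: it is elementary (no ideals of minors, no degree bounds in polynomial rings); the threshold $t$ can be made explicit in terms of $n,k,i,q$ alone rather than of ideal-theoretic data; and it identifies the limit in closed form, $\alpha_i^\infty(C)=s^\ast=d_{k-i+1}(C)$, the $(k-i+1)$-st generalized Hamming weight of $C$ --- a statement the paper never makes, and which checks against its own examples (for the codes $C_1,C_2$ of Example~\ref{ex:duality} one gets $d_2(C_1)=5$ and $d_2(C_2)=4$, matching the asymptotic values $\alpha_1^\infty(C_1)=5$ and $\alpha_1^\infty(C_2)=4$ computed in Section~\ref{asymtotic}). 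What the paper's route buys is that the determinantal formalism does not lean on the support structure peculiar to the Hamming metric, so it is closer to an argument one could try to adapt to other metrics. To finish your write-up you only need the uniform estimate $\binom{k}{i}_{q^\ell}-q^{\ell d i}\binom{k-d}{i}_{q^\ell}=O\bigl(q^{\ell(i(k-i)-1)}\bigr)$ for $0\le d\le k-i$, with the implied constant depending only on $k$ and $i$ (routine), and you can drop the Lang--Weil remark, which is correct but far heavier than needed.
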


\begin{proof}
Let $1\leq i\leq k$ and $j=k-i$ and assume $\alpha_i^\infty(C)=d$. Let $H$ be a parity check matrix of $C$ and let $$\mathbf{x}=\begin{pmatrix} x_{11}&\cdots&x_{1n}\\ \vdots&\ddots&\vdots\\
    x_{j1}&\cdots&x_{jn}\end{pmatrix}.$$

Let $T(\mathbf{x})=\begin{pmatrix} H\\ \mathbf{x}\end{pmatrix}$ and define the following sets of ideals in $\bar{\mathbb{F}_q}[\mathbf{x}]$:
\begin{itemize}
\item $I_{dim}$ the ideal of minors of size $n-i$ of $T(\mathbf{x})$.
\item For each $J\subseteq [n]$ of size $d-1$, let $I_J$ be the ideal of minors of size $d-1$ of the submatrix of $T(\mathbf{x})$ whose columns are indexed by $J$.
\item $$I_{C,i}=I_{\dim}\cap\bigcap_{J\subseteq[n], |J|=d-1} I_J.$$
\end{itemize}

Let $P\in\mathbb{F}_q^{jn}$. The previous ideals provide us with the following information:
\begin{itemize}
\item If $P\notin Z(I_{\dim})$, then $T(P)$ has rank $n-i$ and it defines a parity check for a subcode of $C$ of dimension $i$.
\item If for any $J\subseteq [n]$ of size $d-1$, $P\notin Z(I_J)$, then any $d-1$ columns of $T(P)$ are linearly independent, hence the code with parity check matrix $T(P)$ has minimum distance at least $d$.
\item By the previous two items, if $P\notin Z(I_{C,i})$, then $T(P)$ is the parity check matrix of a subcode of $C$ of dimension $i$ and minimum distance greater than or equal to $d$. Equality follows from the maximality of $\alpha_i^\infty(C)$.
\end{itemize}

Let $\mu(I_{C,i})$ be the minimum degree of an element of $I_{C,i}$ and let $$t=\min\{r\ :\ \mu(I_{C,i})< q^r\}.$$

Then for any $\ell\geq t$ one has $I_{C,i}\subseteq\sqrt {I_{C,i}}\not\subseteq (x_{gh}^{q^\ell}-x_{gh})_{g,h=1}^{j,n}$, hence $\mathbb{F}_{q^\ell}^n\setminus Z(I_{C,i})\neq \emptyset$. This guarantees that for any $\ell\geq t$, $\alpha_i^\ell(C)=d$.
\end{proof}

\begin{example}
Let $C_1$ and $C_2$ be the binary codes of Example~\ref{ex:duality}. We already remarked in Example~\ref{ex:duality} that $C_1$ and $C_2$ have the same code distances. However, they do not have the same asymptotic code distances, as $$\alpha_1^\infty(C_1)=\alpha_1^t(C_1)=5$$ while $$\alpha_1^\infty(C_2)=\alpha_1^t(C_2)=4$$ for any $t\geq 2$. 
Moreover, $2=\alpha_j(C_i)\leq \alpha_j^t(C_i)\leq d_{\min}(C_i)=2$ for every $j\neq 1,5$, $t\geq 1$, and $i=1,2$, showing that $\alpha_j^\infty(C_i)=2$ for $i=1,2$. Finally, $\alpha_5^\infty(C_i)=1$ for $i=1,2$.
\end{example}

\begin{theorem}\label{thm:chainMDS}
Let $V=\FF_q^n$ and let $d$ be the Hamming distance. Let $C\subseteq V$ be a $k$-dimensional MDS code. Then there exists a $t\geq 1$ such that $\alpha_{i}^t(C)=n-i+1$ for $1\leq i\leq k$.
\end{theorem}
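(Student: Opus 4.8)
The plan is to invoke the characterization of $\alpha_i$ for MDS codes from Proposition~\ref{alpha for MDS}, which reduces the claim to producing, over a suitable extension field, an $i$-dimensional MDS subcode of $C_t$ for each $1\le i\le k$. Since the Hamming minimum distance is preserved under field extension, $C_t=C\otimes_{\FF_q}\FF_{q^t}$ is again a $k$-dimensional MDS code for every $t$, so $\alpha_k^t(C)=n-k+1$ automatically. For $1\le i\le k-1$, Proposition~\ref{alpha for MDS} tells us that $\alpha_i^t(C)=n-i+1$ if and only if $C_t$ contains an $i$-dimensional MDS code, so it suffices to exhibit such a subcode once $t$ is large enough.

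First I would fix a $k\times n$ generator matrix $G$ of $C$ over $\FF_q$ and parametrize the $i$-dimensional subcodes of $C_t$ as the row space of $AG$, where $A$ is an $i\times k$ matrix over $\FF_{q^t}$ of rank $i$. For a set $S\subseteq\{1,\ldots,n\}$ of size $i$, write $G_S$ for the $k\times i$ submatrix of $G$ on the columns indexed by $S$. The code generated by $AG$ is $i$-dimensional and MDS precisely when $\det(AG_S)\ne 0$ for every such $S$, by the standard fact that a length-$n$, dimension-$i$ code is MDS if and only if every $i$ columns of a generator matrix are linearly independent. I would then treat the $ik$ entries of $A$ as indeterminates and set $F_i(A)=\prod_{|S|=i}\det(AG_S)$, a polynomial of degree $i\binom{n}{i}$ in these variables.

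The key step, and the only place where the MDS hypothesis on $C$ enters, is to show that each factor $\det(AG_S)$ is not the zero polynomial, equivalently that $G_S$ has full column rank $i$. Since $C$ is MDS, every set of at most $k$ columns of $G$ is linearly independent; in particular the $i\le k$ columns of $G_S$ are independent, so $G_S$ has rank $i$ and some specialization of $A$ makes $AG_S$ invertible. Hence $F_i$ is a nonzero polynomial, and I would invoke the Schwartz--Zippel bound: a nonzero polynomial of degree $D$ over $\FF_{q^t}$ has a non-root in $\FF_{q^t}^{ik}$ as soon as $q^t>D$. Choosing $t$ with $q^t>\max_{1\le i\le k-1} i\binom{n}{i}$, one obtains, for each $i$ simultaneously and over the single field $\FF_{q^t}$, a matrix $A_i$ with $F_i(A_i)\ne 0$; since $A_iG_S$ is then invertible for every $S$, the matrix $A_iG$ has rank $i$ and its row space is an $i$-dimensional MDS subcode of $C_t$, giving $\alpha_i^t(C)=n-i+1$.

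I expect the main obstacle to be exactly the non-vanishing of the minor polynomials $\det(AG_S)$, since this is what guarantees that a generic subcode of an MDS code is MDS; the remaining ingredients (preservation of minimum distance under extension, the MDS--minor characterization, and the Schwartz--Zippel estimate) are routine. A secondary point to handle cleanly is that a single exponent $t$ must serve all dimensions $i$ at once, which is immediate here because there are only finitely many conditions and one can take $q^t$ to exceed every degree bound simultaneously; in particular no nestedness of the chosen subcodes is required.
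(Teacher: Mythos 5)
Your proof is correct, but it takes a genuinely different route from the paper. The paper argues by induction on $k$ on the parity-check side: it punctures $C$, applies the induction hypothesis to get (over some $\FF_{q^s}$) a codimension-one MDS subcode of the punctured code, and then lifts this by appending one row to the parity check matrix of $C$, choosing the new entry $x$ in a large enough extension so that all maximal minors involving the last column (each linear in $x$ with nonzero leading coefficient) are nonzero; iterating produces the MDS subcodes as a \emph{nested chain}, a feature the paper exploits in the remark that follows (for $q\gg 0$ every MDS code fits in a maximal chain of MDS codes). You instead work on the generator-matrix side with a direct genericity argument: parametrize $i$-dimensional subcodes as row spaces of $AG$, observe that the MDS hypothesis on $C$ forces each $G_S$ to have full column rank so that each $\det(AG_S)$ is a nonzero polynomial in the entries of $A$, and apply Schwartz--Zippel to the product $F_i$ of degree $i\binom{n}{i}$. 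What your approach buys is an explicit, effective bound on the field size, namely $q^t>\max_{1\le i\le k-1} i\binom{n}{i}$, a single extension handling all dimensions simultaneously, and no induction; what it gives up is the nestedness of the subcodes, which the theorem does not require but which the paper's construction provides for free and reuses. Both arguments ultimately rest on the same mechanism: MDS-ness of $C$ guarantees the relevant minor polynomials are not identically zero, and a sufficiently large extension field then supplies a common non-root. All the individual steps you use (preservation of the Hamming minimum distance under extension of scalars, the characterization of $\alpha_i$ via Proposition~\ref{alpha for MDS}, the minor criterion for MDS codes, and the full-column-rank claim for $G_S$) check out.
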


\begin{proof}
We prove the statement by induction on $k\geq 1$. If $k=1$, then the statement trivially holds. Assume by induction that the statement holds for $k-1$ and let $H$ be a parity check matrix of $C$. The matrix $H^\prime$ obtained from $H$ by deleting the last column is the parity check matrix of a puncturing of $C$, say $D\subseteq\FF_q^{n-1}$. By the induction hypothesis, there exists an $s$ such that $D\otimes_{\FF_q}\FF_q^s$ has an MDS subcode of codimension one. In other words, there is an element $(h_1,\ldots,h_{n-1})\in\FF_{q^s}^{n-1}$ that one can add as a row to $H^\prime$ and such that the resulting matrix is the parity check matrix of an MDS code $D^\prime$. In order to produce an MDS subcode of $C\otimes_{\FF_q}\FF_q^{t}$, one needs to find $t$ such that $s\mid t$ and an element $x\in\FF_q^{t}$ with the property that every maximal minor of the matrix obtained by appending the row $(h_1,\ldots,h_{n-1},x)$ to $H$ is nonzero. Every minor which does not involve the last column is nonzero, since $D^\prime$ is MDS. Every minor which involves the last column yields a linear equation in $x$ where the coefficient of $x$ is nonzero, since it is a maximal minor of $H^\prime$. Therefore, over a large enough field extension $\FF_{q^t}$ of $\FF_{q^s}$ one finds $x\in\FF_q^{t}$ which makes all minors nonzero. This proves that $\alpha_{k-1}^{t}(C)=n-k+2$. Repeatedly applying this result yields the thesis.
\end{proof}

\begin{remark}
Notice that the same argument used in the proof of Theorem \ref{thm:chainMDS} shows that, for $q\gg 0$, any MDS code in $\FF_q^n$ fits in a maximal chain of MDS codes. In particular, if one takes $q$ larger than the number of maximal minors of the matrix obtained by appending the row $(h_1,\ldots,h_{n-1},x)$ to $H$ which involve the last column, then there is a value of $x$ which makes all these minors nonzero. Notice that the number of such minors only depends on $k,n$. This shows that, for $q\gg 0$, any MDS code in $\FF_q^n$ contains a maximal chain of MDS subcodes. By applying the same result to the dual code, one sees that the original code fits in a maximal chain of MDS codes. 
\end{remark}

Generalized covering radii were defined in~\cite[Definition 8]{EFS21} for codes endowed with the Hamming metric. We extend the definition in the obvious way to the larger family of codes that we consider and to field extensions of arbitrary degrees.

\begin{definition}\label{def:gencovrad}
Let $C\subseteq V$ be a code, let $i\geq 1$. The $i$-th {\bf generalized covering radius} of $C$ is 
$$R_\ell(C)=\rho(C_\ell).$$
\end{definition}

This is related to our concept of asymptotic code distances as follows.

\begin{theorem}\label{thm:genradii}
Let $C\subseteq V$ be a $k$-dimensional code. Then 
$$\alpha_{k+1}^\ell(C)=\min\{d_{\min}(C),R_\ell(C)\}$$
for any $\ell\geq 1$.
\end{theorem}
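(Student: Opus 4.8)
The plan is to reduce the statement to Theorem~\ref{thm:radii} applied to the extended code $C_\ell$. First I would observe that $C_\ell=C\otimes_{\FF_q}\FF_{q^\ell}$ is itself a $k$-dimensional code inside $\FF_{q^\ell}^n$, since the $\FF_{q^\ell}$-dimension of the extension equals the $\FF_q$-dimension of $C$. Hence Theorem~\ref{thm:radii} applies verbatim to $C_\ell$, and specializing that result to $i=1$ gives
$$\alpha_{k+1}(C_\ell)=\min\{d_{\min}(C_\ell),\rho_1(C_\ell)\}.$$

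Next I would translate each term into the notation of the present statement. By the definition of extended code distance, $\alpha_{k+1}(C_\ell)=\alpha_{k+1}^\ell(C)$. Since the first generalized radius coincides with the covering radius (as recorded in the proposition preceding Theorem~\ref{thm:radii}), we have $\rho_1(C_\ell)=\rho(C_\ell)=R_\ell(C)$ by Definition~\ref{def:gencovrad}. Finally, $d_{\min}(C_\ell)=d_{\min}(C)$, because the Hamming, rank, and sum-rank distances are all preserved under scalar extension; this is precisely the mild hypothesis already invoked earlier in the section. Substituting these three identities yields
$$\alpha_{k+1}^\ell(C)=\min\{d_{\min}(C),R_\ell(C)\},$$
which is the desired equality.

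The argument is essentially a bookkeeping exercise that repackages an already-proved result for the extended code, so there is no substantial obstacle. The one point that genuinely requires the structure of the problem is the invariance $d_{\min}(C)=d_{\min}(C_\ell)$ under field extension: this is exactly the property that powers the stabilization theorems earlier in this section, and it holds for all three metrics under consideration. One should also note that the statement tacitly presumes $k<N$, so that the supercode distance $\alpha_{k+1}$ is defined and $i=1$ falls within the admissible range $1\leq i\leq N-k$ of Theorem~\ref{thm:radii}; in the degenerate boundary case $C=V$ there is nothing to prove.
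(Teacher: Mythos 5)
Your proposal is correct and follows essentially the same route as the paper: apply Theorem~\ref{thm:radii} with $i=1$ to the extended code $C_\ell$, then unwind the definitions $\alpha_{k+1}^\ell(C)=\alpha_{k+1}(C_\ell)$, $\rho_1(C_\ell)=\rho(C_\ell)=R_\ell(C)$, and $d_{\min}(C_\ell)=d_{\min}(C)$. Your added remarks (that $C_\ell$ has $\FF_{q^\ell}$-dimension $k$, and that the boundary case $C=V$ is degenerate) are just slightly more careful bookkeeping than the paper's one-line chain of equalities.
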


\begin{proof}
We have the chain of equalities
$$\alpha_{k+1}^\ell(C)=\alpha_{k+1}(C_\ell)=\min\{d_{\min}(C_\ell),\rho(C_\ell)\}=\min\{d_{\min}(C),R_\ell(C)\}$$
where the second equality follows from Theorem~\ref{thm:radii} and the others follows from the respective definitions.
\end{proof}

{\bf Acknowledgements:} The authors thank Andrea Di Giusto for useful discussions on the material of this paper.


\end{document}